\def\ps@headings{%
\def\@oddhead{\mbox{}\scriptsize\rightmark \hfil \thepage}%
\def\@evenhead{\scriptsize\thepage \hfil \leftmark\mbox{}}%
\def\@oddfoot{}%
\def\@evenfoot{}}
\makeatother \pagestyle{headings}
\newtheorem{lem}{Lemma}
\newtheorem{thm}{Theorem}
\newtheorem{condition}{Condition}
\newcommand{\Rmnum}[1]{\expandafter\@slowromancap\romannumeral #1@}
\begin{document}
%
\title{Channel Estimation for Opportunistic Spectrum Access: Uniform and Random Sensing}
\author{Quanquan~Liang,~
        Mingyan~Liu,~\IEEEmembership{Member,~IEEE,}
\IEEEcompsocitemizethanks{\IEEEcompsocthanksitem Q. Liang and M. Liu are with Department of Electrical Engineering and Computer Science, University of Michigan, Ann Arbor, MI 48109-2122.
Email: liangq@umich.edu, mingyan@eecs.umich.edu
}\thanks{An earlier version of this paper appeared at the Information Theory and Application Workshop (ITA), UC San Diego, CA, February 2010.}}

\IEEEcompsoctitleabstractindextext{%
\begin{abstract}
The knowledge of channel statistics can be very helpful in making sound opportunistic spectrum access decisions.  It is therefore desirable to be able to efficiently and accurately estimate channel statistics.  In this paper we study the problem of optimally placing sensing times over a time window so as to get the best estimate on the parameters of an on-off renewal channel. We are particularly interested in a sparse sensing regime with a small number of samples relative to the time window size.
Using Fisher information as a measure, we analytically derive the best and worst sensing sequences under a sparsity condition.  We also present a way to derive the best/worst sequences without this condition using a dynamic programming approach.  In both cases the worst turns out to be the uniform sensing sequence, where sensing times are evenly spaced within the window.  With these results we argue that without a priori knowledge, a robust sensing strategy should be a randomized strategy.  We then compare different random schemes using a family of distributions generated by the circular $\beta$  ensemble, and propose an adaptive sensing scheme to effectively track time-varying channel parameters.  We further discuss the applicability of compressive sensing for this problem.
\end{abstract}

\begin{keywords}
Spectrum sensing, channel estimation, Fisher information, random sensing, sparse sensing, uniform sensing.
\end{keywords}}

\maketitle


\IEEEdisplaynotcompsoctitleabstractindextext

\IEEEpeerreviewmaketitle

\section{Introduction}\label{sec:intro}

\IEEEPARstart{R}{ecent} advances in software defined radio and cognitive radio \cite{Haykin} have given wireless devices greater ability and opportunity to dynamically access spectrum, thereby potentially significantly improving spectrum efficiency and user performance \cite{Challapali,Akyildiz}.  To be able to fully utilize spectrum availability (either as a secondary user seeking opportunities of idle periods in the presence of primary users, or as one of many peer users in a multi-user system seeking channels with the best condition), a key enabling ingredient in dynamic spectrum access is high quality channel sensing that allows the user to obtain accurate real-time information on the condition of wireless channels.

Spectrum sensing is often studied in two contexts: at the physical layer and at the MAC layer.  Physical layer spectrum sensing typically focuses on the {\em detection} of instantaneous primary user signals.  Several detection methods, such as matched filter detection, energy detection and feature detection, have been proposed for cognitive radios \cite{Cabric}. MAC layer spectrum sensing \cite{kim1,Zhao} is more of a resource allocation issue, where we are concerned with the {\em scheduling} problem of when to sense the channel and the {\em estimation} problem of extracting statistical properties of the random variation in the channel, assuming that when we decide to sense the physical layer can provide sufficiently accurate results on instantaneous channel availability. Such channel statistics can be very helpful in making good channel access decisions, and most studies on opportunistic spectrum access assume such knowledge.

In this paper we focus on the scheduling of channel sensing and study the effect different scheduling algorithms have on the accuracy of the resulting estimate we obtain on channel parameters.  In particular, we are interested in the {\em sparse sensing/sampling} regime where we can use only a limited number of measurements over a given period of time.
The goal is to decide how these limited number of measurements should be scheduled so as to minimize the estimation error within the maximum likelihood (ML) estimator framework. Throughout the paper the terms {\em sensing} and {\em sampling} will be used interchangeably.

MAC layer channel estimation within the context of cognitive radios has been studied in recent years.  Below we review those most relevant to the present paper.  Kim and Shin \cite{kim1} introduced a ML estimator for renewal channels using a uniform sampling/sensing scheme where samples of the channel are taken at regular time intervals.  A more accurate, but also much more computationally costly Bayesian estimator was introduced in \cite{kim2}, again based on uniform sensing. \cite{Crowncom} analyzed the relationship between estimation accuracy, number of samples taken and the channel state transition probabilities by using the sampling and estimation framework of \cite{kim1} and focusing on Markovian channels. \cite{HMM} proposed a Hidden Markov Model (HMM) based channel status predictor using reinforcement learning techniques.  This predictor predicts next channel state based on past information obtained through uniformly sampling the channel. \cite{Wavelet} presented a channel estimation technique based on wavelet transform followed by filtering.
This method relies on dense sampling of the channel.

In most of the above cited work, the focus is on the estimation problem given (sufficiently dense) uniform sampling of the channel, i.e., with equal time periods between successive samples.  This scheme will be referred to as {\em uniform sensing} in the remainder of this paper.  By contrast, sampling schemes where time intervals between successive samples are drawn from a certain probability distribution will be referred to as {\em random sensing} throughout the paper.
We observe that due to constraints on time, energy, memory and other resources, a user may wish to perform channel sensing at much lower frequencies while still hoping for good estimates. This could be relevant for instance in cases where a user wants to track the channel condition in between active data communication, or where a user needs to track a large number of different channels. 
It is this sparse sampling scenario that we will focus on in this study, and the goal is to judiciously schedule these limited number of samples.

Our main contributions are summarized as follows.
\begin{itemize}
\item We demonstrate that when sampling is done sparsely, random sensing significantly outperforms uniform sensing. 
\item In the special case of exponentially distributed on/off durations, we derive tight lower and upper bounds on the Fisher information under a sparsity condition, while obtaining the best and worst possible sampling schemes measured by the Fisher information.  We show that uniform sensing is the {\em worst} one can do; any deviation from it improves the estimation accuracy. 
\item We present a dynamic programming approach to obtain the best and worst sampling sequences in the more general case without the sparsity condition. 
\item We show that under the same channel statistics and the same average sampling interval (or frequency), a random sensing scheme affects the estimation accuracy through the higher-order central moments of the sampling intervals, and use the circular $\beta$ ensemble to study a family of distributions. 
\item We present an {\em adaptive random sensing} scheme that can very effectively track time-varying channel parameters, and is shown to outperform its counterpart using uniform sensing.
\end{itemize}

The remainder of this paper is organized as follows: Section \ref{sec:model} presents the channel models and Section \ref{sec:estimation} gives the detail of the ML estimator.  Then in Section \ref{sec:sensing} we present how the sampling scheme affects the estimation performance; the best and worst sensing sequences with and without a sparse sampling condition are obtained.
In Section \ref{sec:diff_random} we use a family of distributions generated by the circular $\beta$ ensemble to examine different random sampling schemes. Section \ref{sec:tracking} presents an adaptive random sensing scheme, and Section \ref{sec:CS} discusses the applicability of compressive sensing in this problem.  Section \ref{sec:conclusion} concludes the paper.

\section{The Channel Model} \label{sec:model}

In this paper we will limit our attention to MAC layer spectrum sensing as mentioned in the introduction.  Within this context, the channel state perceived by a secondary user is represented by a binary random variable. 
This is a model commonly used in a large volume of literature, from channel estimation (e.g., \cite{kim1,Crowncom}) to opportunistic spectrum access (e.g., \cite{Zhao}) to spectrum measurement (e.g., \cite{mobicom09}).
Specifically, let $Z(t)$ denote the state of the channel at time $t$, such that
\begin{displaymath}
\left\{ \begin{array}{ll}
Z(t)=1 &\textrm{if the channel is sensed busy at time $t$} ~,\\
Z(t)=0 &\textrm{otherwise .} \end{array} \right.
\end{displaymath}

The advantage of such a model is its simplicity and tractability in many instances.  The weakness lies in the fact that the actual energy present or detected in the channel is hardly binary.  The raw channel measurement data will have to go through a binary hypothesis test (e.g., via thresholding) to be reduced to the above form, a process that comes with probabilities of error.  Consequently, the channel is sensed to be in either state with a detection probability and a false alarm probability.

In this paper our focus is on extracting and {\em estimating} essential statistics given a sequence of measured channel states (0s and 1s) rather than the binary {\em detection} of channel state (deciding between 0 and 1 given the energy reading).  For this purpose, we will assume that the channel state measurements are error-free.  If we have side information on what the detection and false alarm probabilities are, then the estimation results may be adjusted accordingly to utilize such knowledge.

The channel state process $Z(t)$ is assumed to be a continuous-time alternating renewal process, alternating between on/busy (state ``1'') and off/idle (state ``0''), an illustration is given in Figure \ref{channel model}. Typically, it is assumed that a secondary user can utilize the channel only when it is sensed to be in the off states (i.e., when the channel is idle or the primary user is absent). When the channel state transitions to the on state, the secondary user is required to vacate the channel so as not to interfere with the primary user (also referred to as the spectrum underlay paradigm, see e.g., \cite{underlay}).

This random process is completely defined by two probability density functions $f_{1}(t)$ and $f_{0}(t)$, $t>0$, i.e., the probability distribution of the sojourn times of the on periods (denoted by the random variable $T_{1}$) and the off periods (denoted by the random variable $T_{0}$), respectively. The channel utilization $u$ is defined as
\begin{equation}
\label{def:u}
u=\frac{E[T_{1}]}{E[T_{1}]+E[T_{0}]},
\end{equation}
which is also the average fraction of time the channel is occupied or busy.
By the definition of a renewal process,
 $T_{1}$ and $T_{0}$ are independent and all on (off) periods are independently and identically distributed. It's worth pointing out that the widely used Gilbert-Elliot model (a two-state Markov chain) is a special case of the alternating renewal process where the on (off) periods are exponentially (in the case of continuous time) or geometrically (in the case of discrete time) distributed.

\begin{figure}[htb]
 \centering
  \includegraphics[scale=0.26]{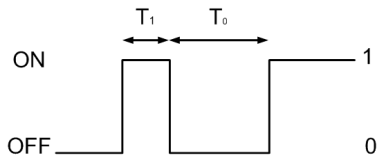}\\
 \centering
 \parbox{11cm}{\caption{Channel model: alternating renewal process with on and off states}\label{channel model}}
\end{figure}

\section{Maximum Likelihood (ML) Based Channel Estimation} \label{sec:estimation}

We proceed to describe the maximum likelihood (ML) estimator \cite{ML_defination} we will use to estimate channel parameters from a sequence of channel state observations.

Recall that the channel state is assumed to follow an alternating renewal process.  Such a process is completely characterized by the set of conditional probabilities
$P_{ij}(\Delta t)$, $i, j\in \{0, 1\}$, $\Delta t\geq 0$, defined as the probability that given $i$ was observed $\Delta t$ time units ago, $j$ is now observed.  This quantity is also commonly known as the semi-Markov kernel of an alternating renewal process \cite{renewal}.
Assuming the process is in equilibrium, standard results from renewal theory \cite{renewal} suggest the following Laplace transforms of the above transition probabilities:
\begin{equation}
\begin{split}
\label{equ1}
&P^*_{00}(s)=\frac{1}{s}-\frac{\left\{1-f_{1}^*(s)\right\}\left\{1-f_{0}^*(s)\right\}}{E[T_{0}]s^2\left\{1-f_{1}^*(s)f_{0}^*(s)\right\}} ~,\\
&P^*_{01}(s)=\frac{\left\{1-f_{1}^*(s)\right\}\left\{1-f_{0}^*(s)\right\}}{E[T_{0}]s^2\left\{1-f_{1}^*(s)f_{0}^*(s)\right\}} ~,\\
&P^*_{10}(s)=\frac{\left\{1-f_{1}^*(s)\right\}\left\{1-f_{0}^*(s)\right\}}{E[T_{1}]s^2\left\{1-f_{1}^*(s)f_{0}^*(s)\right\}} ~,\\
&P^*_{11}(s)=\frac{1}{s}-\frac{\left\{1-f_{1}^*(s)\right\}\left\{1-f_{0}^*(s)\right\}}{E[T_{1}]s^2\left\{1-f_{1}^*(s)f_{0}^*(s)\right\}} ~, \\
\end{split}
\end{equation}
where $f_{1}^*(s)$ and $f_{0}^*(s)$ are the Laplace transforms of $f_{1}(t)$ and $f_{0}(t)$, respectively.
We see 
that these are completely defined by the probability density functions $f_{1}(t)$ and $f_{0}(t)$.
The above set of equations are very useful in recovering the time-domain expressions of the semi-Markov kernel (often times this is the only viable method).
For example, in the special case where the channel has exponentially distributed on/off periods, we have
\begin{equation}
\label{exp_distribution}
\left\{ \begin{array}{ll}
f_{1}(t)=\theta_{1}e^{-\theta_{1}t} \\
f_{0}(t)=\theta_{0}e^{-\theta_{0}t} ~.
\end{array}
\right.
\end{equation}
Their corresponding Laplace transforms and expectations are
\begin{eqnarray*}
\left\{ \begin{array}{ll}
f_{1}^*(s)=\theta_{1}/(s+\theta_{1}) \\
f_{0}^*(s)=\theta_{0}/(s+\theta_{0}) ~,
\end{array}
\right.
\quad \quad \quad
\left\{ \begin{array}{ll}
E[T_{1}]=1/\theta_{1} \\
E[T_{0}]=1/\theta_{0} ~.
\end{array}
\right.
\end{eqnarray*}
Substituting the above expressions into  (\ref{equ1}) followed by an inverse Laplace transform we get the state transition probability as follows:
\begin{equation}
\label{exp}
P_{ij}(\Delta t)=u^{j}(1-u)^{1-j}+(-1)^{j+i}u^{1-i}(1-u)^{i}e^{-(\theta_{0}+\theta_{1})\Delta t} ~,
\end{equation}
where $u=\frac{E[T_{1}]}{E[T_{1}]+E[T_{0}]}$, as defined earlier.

In this paper we consider the following estimation problem.  Assume that the on/off periods are given by certain known distribution functions $f_0(t)$ and $f_1(t)$ but with unknown parameters.   Suppose we obtain $m$ samples $\{z_{1}, z_{2}, \cdots, z_{m}\}$, taken at sampling times $\{t_{1},t_{2},\cdots,t_{m}\}$, respectively.  We wish to use these samples to estimate the unknown parameters.

First note that the channel utilization factor $u$ 
can be estimated through the sample mean of the $m$ measurements as follows
\begin{equation}
\label{eqn:channel_utilization}
\hat{u}=\frac{1}{m}\sum_{i=1}^{m}z_{i}~.
\end{equation}

Let $\bar{\theta}$ be the unknown parameters of the on/off distributions: $\bar{\theta}=\{\bar\theta_{1}, \bar\theta_{0}\}$.  Note that in general $\bar\theta_1$ and $\bar\theta_0$ are vectors themselves.  Then the likelihood function is given by
\begin{eqnarray}
L({\overline{\theta}})&=& Pr\{\overline{Z};\overline{\theta}\}\nonumber \\
&=& Pr\{Z_{t_{m}}=z_{m}, Z_{t_{m-1}}=z_{m-1}, Z_{t_{m-2}}=z_{m-2},\dots, Z_{t_{1}}=z_{1};\overline{\theta}\} ~.
\end{eqnarray}

The idea of ML estimation is to find the value of $\overline{\theta}$ that maximizes the log likelihood function $\ln L(\overline{\theta})$, i.e., the estimate $\hat{\overline{\theta}}$ is such that $\frac{\partial{\ln L(\overline{\theta})}}{\partial{\overline{\theta}}} |_{\hat{\overline{\theta}}} = 0$.  This method has been used extensively in the literature \cite{ML1,ML2,ML3,MLchannel1,MLchannel2}.  For a fixed set of data and underlying probability model, the ML estimator selects the parameter value that makes the data ``most likely'' among all possible choices.  Under certain (fairly weak) regularity conditions the ML estimator is asymptotically optimal \cite{ML}.

The question we wish to investigate is what impact the selection of the sampling time sequence $\{t_{1}, t_{2}, \cdots, t_{m}\}$ has on the performance of this estimator, given a limited number of samples $m$.
Specifically, we question whether random sampling is a better way of sensing the channel than uniform sampling where the measurement samples are taken at regular time intervals.  

For the remainder of our analysis we will limit our attention to the case where the channel on/off durations are given by exponential distributions.  This is for both mathematical tractability and simplicity of presentation.  We explore other distributions in our numerical experiments.

Since the exponential distribution is defined by a single parameter, we have now $\bar{\theta}=\{\theta_{1}, \theta_{0}\}$, where $\theta_1$ and $\theta_0$ are the two unknown scalar parameters of the on and off exponential distributions, respectively.  Using the memoryless property, the likelihood function becomes
\begin{eqnarray}
\label{likfun}
L({\overline{\theta}})&=&Pr\{\overline{Z};\overline{\theta}\} \nonumber\\
&=& Pr\{Z_{t_{1}}=z_{1};\overline{\theta}\}\cdot\prod_{i=2}^{m}Pr\{Z_{t_{i}}=z_{i}|Z_{t_{i-1}}=z_{i-1};\overline{\theta}\} \nonumber \\
&=& Pr\{Z_{t_{1}}=z_{1};\overline{\theta}\}\cdot\prod_{i=2}^{m}P_{z_{i-1}z_{i}}(\Delta t_{i};\overline{\theta})~.
\end{eqnarray}
where $\Delta t_{i}=t_{i}-t_{i-1}$.  The first quantity on the right is taken to be
\begin{equation}
Pr\{z_{t_{1}}=z_{1};\overline{\theta}\} = u^{z_{1}}(1-u)^{1-z_{1}} ~. \label{eqn:first-term-u}
\end{equation}
That is, the probability of finding the channel in a particular state (LHS of Eqn  (\ref{eqn:first-term-u})) is taken to be the stationary distribution given by the RHS.  This choice is justified by assuming that the channel is in equilibrium.

The second quantity  $P_{z_{i-1}z_{i}}(\Delta t_{i}; \bar\theta)$ is given in  Eqn (\ref{exp}).
Combining these two quantities, we have
\begin{equation}
\label{like_full}
\begin{split}
L(\theta_{0}, \theta_{1})=& L(\bar\theta) \\
=&u^{z_{1}}(1-u)^{1-z_{1}}\prod_{i=2}^{m}\big(u^{z_{i}}(1-u)^{1-z_{i}}+(-1)^{z_{i}+z_{i-1}}u^{1-z_{i-1}}(1-u)^{z_{i-1}}e^{-(\theta_{0}+\theta_{1})\Delta t_{i}}\big) ~.
\end{split}
\end{equation}

The estimates for the parameters are found by solving
\begin{equation}
\left\{ \begin{array}{ll}
\frac{\partial{\ln L(\theta_{0}, \theta_{1})}}{\partial\theta_{0}}=0 \\
\frac{\partial{\ln L(\theta_{0}, \theta_{1})}}{\partial\theta_{1}}=0 ~.
\end{array}
\label{frac}
\right.
\end{equation}

Technically, to get the estimates for both $\theta_0$ and $\theta_1$ one needs to solve the above two equations simultaneously.  This however proves to be computationally complex and analytically intractable.  Instead, we adopt the following estimation procedure.  We first estimate $u$ using  Eqn (\ref{eqn:channel_utilization}), and take $\theta_1 =\frac{(1-u)\theta_{0}}{u}$.  Due to the exponential assumption, it can be shown that this estimate of $u$ is unbiased regardless of the sequence $\{t_1, \cdots, t_m\}$ as long as it is determined offline.
The likelihood function (\ref{like_full}) can then be re-written as
\begin{equation}
\begin{split}
\label{eqn:likelihood-theta0}
L(\theta_{0})=&u^{z_{1}}(1-u)^{1-z_{1}}\prod_{i=2}^{m}\big(u^{z_{i}}(1-u)^{1-z_{i}}+(-1)^{z_{i}+z_{i-1}}u^{1-z_{i-1}}(1-u)^{z_{i-1}}e^{-\theta_{0}\Delta t_{i}/u}\big) ~.
\end{split}
\end{equation}
The estimation of $\theta_{0}$ is then derived by solving the equation $\frac{\partial{\ln L(\theta_{0})}}{\partial\theta_{0}}=0$.

In our analysis, we will use this procedure by treating $u$ as a known constant and solely focus on the estimation of $\theta_0$, with the understanding that $u$ is separately and unbiasedly
estimated, and once we have the estimate for $\theta_0$ we have the estimate for $\theta_1$.   It has to be noted that this procedure is in general not equivalent to solving (\ref{frac}) simultaneously. 
However, we have found this to be a very good approximation, computationally feasible, and much more amenable to analysis. 

\section{Best and worst sampling sequences} \label{sec:sensing}

The goal of this study is to judiciously schedule a very limited number of sampling times so that the estimation accuracy is least affected.   We first argue intuitively why the commonly used uniform sampling does not perform well when the number of samples allowed is limited.  This motivates us to look for better sampling schemes.   We then present a precise analysis  through the use of Fisher information, in the case of exponential on/off distributions.  In particular, we will show that using this measure, under a certain sparsity condition, uniform sensing is the {\em worst} schedule in terms of its estimation accuracy.   We also derive an upper bound on the Fisher information as well as the sampling sequence achieving this upper bound.  These provide us with useful benchmarks to assess any arbitrary sampling sequence.  We then present a dynamic programming approach to finding the best and worst sampling sequence without the sparsity condition, which provides a further bound on how well any sampling sequence can be expected to perform.

\subsection{An intuitive explanation} \label{subsec:analysis}

Uniform sensing, where samples are taken at constant time intervals,
is a natural, easy-to-implement, and easy-to-analyze scheme.  Specifically, with the on/off durations being exponential the likelihood function has a particularly simple form; there is also a closed-form solution to the maximization of the log likelihood function, see e.g., \cite{kim1}.
However, when sensing is done sparsely, certain problems arise.
%
%
One of the first things to note is that since there is no variation across sampling intervals under uniform sensing, the uniform interval in general needs to be upper-bounded in order to catch potential channel state changes that occur over small intervals\footnote{One such upper bound was proposed in \cite{kim1}.}.
This bound cannot be guaranteed under sparse sensing.  If sensing is done randomly, then even if the average sampling interval is large, there can be significant probability for sufficiently small sampling intervals to exist in any realization of the sampling time sequence $\{t_{1}, t_{2}, \cdots, t_{m}\}$.

We show in Figure \ref{fig:quick-comparison} a comparison between uniform sensing and random sensing where the sensing times are randomly placed using a uniform distribution \footnote{Here uniform distribution refers to the sampling times being randomly placed within the window following a uniform distribution, not to be confused with uniform sensing where sampling intervals are a constant.} within a window of $5000$ time units.  The on/off periods are exponentially distributed with parameters $E[T_0]=2$, $E[T_1]=1$ time units, respectively.  The figure shows the estimated value of $E[T_0]$ as a function of the number of samples taken within the window of $5000$.  We see that random sensing outperforms uniform sensing, and significantly so when $m$ is small.
\begin{figure}[htb]
\centering
 \includegraphics[width=0.5\textwidth]{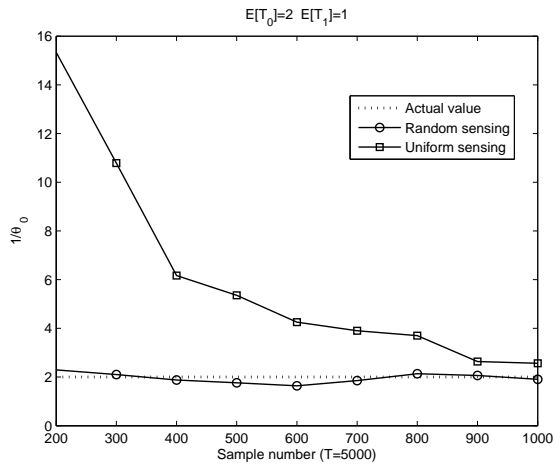}\\
  \centering
  \parbox{9cm}{\caption{Estimation accuracy: uniform sensing vs. random sensing}\label{fig:quick-comparison}}
\end{figure}

The key to the increased accuracy is not so much that we used randomly generated sensing times as is the fact that a randomly generated sequence contains significantly more variability in its sampling intervals.  In this sense a sequence does not have to be randomly generated; as long as it contains sufficient variability, estimation accuracy can be improved.  Random generation is an easy and more systematic way of obtaining such a sequence.

To see why this variability is important when sampling is sparse, consider the transition probabilities $P_{ij}(\Delta t)$, $i, j\in\{0, 1\}$.  As shown in the previous section, these probabilities completely define the likelihood function.  They approach the stationary probabilities as $\Delta t$ increases.  For instance, we have $P_{01}(\Delta t) \rightarrow \frac{E[T_{1}]}{E[T_{1}]+E[T_{0}]} = u$ as $\Delta t\rightarrow \infty$, and so on.  This stationary quantity represents the average fraction of time the channel is busy, which contains little direct information on the average length of a busy period, the parameter we are trying to estimate.  Depending on the mixing time of the underlying renewal process, this convergence can occur rather quickly.  What this means is that if sampling is sparsely done,
then these transition probabilities will become constant-like (i.e., approaching the stationary value).
Loosely speaking, this means that the samples are of a similar quality, each providing little additional information.  
This also in turn causes the likelihood function to be constant-like, making it difficult for the ML estimator to produce accurate estimates \cite{ML_defination}.
Interestingly, in a similar spirit but for a different problem,  \cite{Tong} studied an information retrieval problem where sensors are queried for data and they may be active or inactive.  It was shown that if the active sensors are sparse, then randomly accessing them outperforms periodic (or uniform) schedules.

\subsection{Fisher information and preliminaries} \label{subsec:fisher}

We now analyze this notion of information content more formally via a measure known as the {\em Fisher information} \cite{fisher}.
For the likelihood function given in Eqn (\ref{eqn:likelihood-theta0}),  the Fisher information is defined as:
\begin{eqnarray}
\label{eqn:fisher}
I(\theta_{0})&=& -E[\frac{\partial^{2}\ln L(\theta_{0})}{\partial\theta^{2}_{0}}]  ~.
\end{eqnarray}
The Fisher information is a measure of the amount of information an observable random variable conveys about an unknown parameter.  This measure of information is particularly useful when comparing two observation methods of random processes (see e.g., \cite{FisherInfo}).  The precision to which we can estimate ${\theta_0}$ is fundamentally limited by the Fisher information of the likelihood function.


Due to the product form of the likelihood function, we have
\begin{eqnarray}
\label{eqn:fisher-information}
I(\theta_{0}) 
&=& -E\big[\sum_{i=2}^{m}\frac{\partial^{2}\ln[\alpha_{i}+\beta_{i}e^{-\theta_{0}\Delta t_{i}/u}]}{\partial\theta^{2}_{0}}\big] \nonumber \\
&=& \sum_{i=2}^{m}\frac{\Delta
t_{i}^{2}}{u^{2}}E\big[\frac{-\alpha_{i}\beta_{i}e^{-\theta_{0}\Delta
t_{i}/u}}{(\alpha_{i}+\beta_{i}e^{-\theta_{0}\Delta
t_{i}/u})^{2}}\big] ~, 
\end{eqnarray}
where $\alpha_{i}=u^{z_{i}}(1-u)^{1-z_{i}}$ and $\beta_{i}=(-1)^{z_{i}+z_{i-1}}u^{1-z_{i-1}}(1-u)^{z_{i-1}}$.
Define:
\begin{eqnarray}
g(\Delta t_i; \theta_0) &=& \frac{\Delta
t_{i}^{2}}{u^{2}}E\big[\frac{-\alpha_{i}\beta_{i}e^{-\theta_{0}\Delta
t_{i}/u}}{(\alpha_{i}+\beta_{i}e^{-\theta_{0}\Delta
t_{i}/u})^{2}}\big] ~,
\end{eqnarray}
so that the Fisher information can be simply written as $I(\theta_{0})=\sum_{i=2}^{m}g(\Delta t_{i})$.
The function $g()$ will be referred to as the {\em Fisher function} in our discussion. Note that $g()$ is a function of both $\Delta t_i$ and $\theta_0$.  However, we will suppress $\theta_0$ from the argument and write it simply as $g(\Delta t)$.  This is because our analysis focuses on how this function behaves as we select different $\Delta t$ (the sampling interval) while holding $\theta_0$ constant.
Note that the first term in Eqn (\ref{eqn:likelihood-theta0}) does not appear in the above expression.  This is because this first term is only a function of $u$ (see Eqn (\ref{eqn:first-term-u})), which is separately estimated using Eqn (\ref{eqn:channel_utilization}) and not viewed as a function of $\theta_0$. Therefore the term disappears after the differentiation.


The expectation on the right-hand side of (\ref{eqn:fisher-information}) can be calculated by considering all four possibilities for the pair ($z_{i-1}, z_i$), i.e., (0, 0), (0, 1), (1, 0), and (1, 1).  Using Eqn (\ref{exp}), we obtain the transition probability of each case to be
$(1-u)P_{00}(\Delta t)$, $(1-u)P_{01}(\Delta t)$, $uP_{10}(\Delta t)$ and $uP_{11}(\Delta t)$, respectively.
We can therefore calculate the Fisher function as follows:
\begin{eqnarray}
\label{eqn:fisher-function}
g(\Delta t)&=& \frac{
\Delta t^{2}}{u^{2}}e^{-\theta_{0}\Delta t/u}\big[\frac{u^{2}(1-u)}{u-ue^{-\theta_{0}\Delta t/u}}+ \frac{u(1-u)^{2}}{(1-u)-(1-u)e^{-\theta_{0}\Delta t/u}} \nonumber \\
&&\quad \quad \quad \quad \quad-\frac{u(1-u)^{2}}{(1-u)+ue^{-\theta_{0}\Delta t/u}} 
-\frac{u^{2}(1-u)}{u+(1-u)e^{-\theta_{0}\Delta t/u}}\big].
\end{eqnarray}

Below we show that under a certain {\em sparsity} condition on the sampling rate, the Fisher function is strictly convex, and that the Fisher information is minimized when uniform sampling is used.  We begin by introducing this sparsity condition.

\begin{condition}(Sparsity condition)\label{cond:sparsity}
Let $\alpha = \max\{2+\sqrt{2}, \ln(\frac{1-u}{u}),
\ln(\frac{u}{1-u})\}$.  This condition requires that $\Delta
t>\alpha u/ \theta_0$.
\end{condition}

Taking $\Delta t$ to be the time between two consecutive sampling points, the above condition states that these two points cannot be too close together with respect to the average off duration ($1/\theta_0$) and the channel utilization $u$.

\begin{lem}\label{lem:convexity}
The Fisher function $g(\Delta t)$ given in Eqn (\ref{eqn:fisher-function}) is strictly convex under Condition \ref{cond:sparsity} (i.e, for $\Delta t>\alpha u/ \theta_0$).
\end{lem}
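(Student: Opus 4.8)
The plan is to put $g$ into a transparent closed form, reduce to a one‑parameter problem by rescaling time, and then verify $g''(\Delta t)>0$ on the region $\Delta t>\alpha u/\theta_{0}$ by direct differentiation.

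First I would collapse the bracket in Eqn~(\ref{eqn:fisher-function}). Setting $x:=e^{-\theta_{0}\Delta t/u}$ and $q:=\theta_{0}/u$, the first two of the four fractions turn out to be identical, and after combining over common denominators the Fisher function reduces to
\[
g(\Delta t)=\frac{1-u}{u}\,\Delta t^{2}\left[\frac{2}{e^{q\Delta t}-1}-\frac{1}{e^{q\Delta t}+\tfrac{u}{1-u}}-\frac{1}{e^{q\Delta t}+\tfrac{1-u}{u}}\right].
\]
Pairing one copy of the first fraction with each of the other two rewrites this as a sum of two manifestly positive pieces,
\[
g(\Delta t)=\frac{1-u}{u}\,\Delta t^{2}\left[\frac{1}{(e^{q\Delta t}-1)\big((1-u)e^{q\Delta t}+u\big)}+\frac{1}{(e^{q\Delta t}-1)\big(ue^{q\Delta t}+(1-u)\big)}\right].
\]
Since strict convexity is preserved under the affine change of variable $\Delta t\mapsto\tau:=q\Delta t$, under which Condition~\ref{cond:sparsity} becomes simply $\tau>\alpha$, it suffices to prove that each of the two rescaled summands is strictly convex for $\tau>\alpha$; this step also eliminates $\theta_{0}$, leaving only $u\in(0,1)$.

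Up to a positive multiplicative constant, each summand has the form $\phi(\tau)=\tau^{2}/D(\tau)$, where $D$ is the product of the two positive factors $(e^{\tau}-1)$ and $(ae^{\tau}+b)$ with $a+b=1$. Using the identity $\phi''D^{3}=2D^{2}-4\tau D'D-\tau^{2}D''D+2\tau^{2}(D')^{2}$, the right‑hand side is a polynomial of degree four in $e^{\tau}$ with $\tau$‑dependent coefficients, which I would organize by powers of $e^{\tau}$. The constant $2+\sqrt2$ already surfaces in the simplest building block $\tau^{2}/(e^{\tau}-1)$: there the same identity gives $\phi''D^{3}=e^{2\tau}(\tau^{2}-4\tau+2)+e^{\tau}(\tau^{2}+4\tau-4)+2$, and $\tau^{2}-4\tau+2>0$ exactly for $\tau>2+\sqrt2$, after which all three coefficients are positive. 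For the two full summands I would split into two regimes: once $\tau>\ln\frac{b}{a}$ the quadratic part $ae^{2\tau}$ dominates $D$, so $\phi$ behaves like $\tau^{2}e^{-2\tau}$, which is convex for $\tau>1+\tfrac{1}{\sqrt2}$ and a fortiori for $\tau>2+\sqrt2$; while for $2+\sqrt2<\tau\le\ln\frac{b}{a}$ the factor $ae^{\tau}+b$ is essentially a constant, so $\phi$ behaves like $\tau^{2}/(e^{\tau}-1)$, convex by the previous display. The two summands have $(a,b)=(1-u,u)$ and $(a,b)=(u,1-u)$, hence thresholds $\ln\frac{u}{1-u}$ and $\ln\frac{1-u}{u}$, so the choice $\alpha=\max\{2+\sqrt2,\ln\frac{1-u}{u},\ln\frac{u}{1-u}\}$ covers both regimes for both summands; adding the two strictly convex summands yields strict convexity of $g$.

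The main obstacle is making that last paragraph rigorous rather than heuristic: after clearing denominators one faces degree‑four polynomials in $e^{\tau}$ whose lower‑order coefficients mix $\tau$, $u$ and $1-u$ with both signs, and they must be bounded uniformly over $u\in(0,1)$ tightly enough that positivity survives all the way down to $\tau=\alpha$. A tempting shortcut --- establishing convexity of $\tau^{2}/(e^{\tau}-1)$ and treating the negative terms $-1/(e^{\tau}+\tfrac{u}{1-u})$ and $-1/(e^{\tau}+\tfrac{1-u}{u})$ as a perturbation --- does not close, because those terms are themselves concave for large $\tau$, so one genuinely needs the convex part to dominate; the decomposition into two positive summands above is precisely what sidesteps that difficulty.
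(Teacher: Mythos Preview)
Your closed form and the two-summand decomposition are both correct, but the proof has a genuine gap exactly where you flag it: the convexity of each summand $\tau^{2}/[(e^{\tau}-1)(ae^{\tau}+b)]$ is argued only by the heuristics ``behaves like'' and ``essentially a constant'', and the verification via the degree-four polynomial in $e^{\tau}$ is never carried out. Asymptotic dominance of the leading coefficient does not by itself control the sign of $\phi''$ all the way down to $\tau=\alpha$, and in the intermediate regime $2+\sqrt{2}<\tau\le\ln(b/a)$ the factor $ae^{\tau}+b$ varies by a factor of two, so treating it as a constant in a second-derivative computation is not legitimate. As written, the argument does not close.

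The paper avoids this computation entirely by a different decomposition. It writes $g=h_{o}\cdot h$ as a \emph{product}, with $h_{o}(\Delta t)=\tfrac{\Delta t^{2}}{u^{2}}e^{-\theta_{0}\Delta t/u}$ and $h=h_{1}+h_{2}+h_{3}$ the bracket in Eqn~(\ref{eqn:fisher-function}). One-line differentiations show that under Condition~\ref{cond:sparsity} each of $h_{o},h_{1},h_{2},h_{3}$ is strictly decreasing and strictly convex; the three branches of $\alpha$ arise one-to-one from $h_{o}$ (yielding $2+\sqrt{2}$), $h_{2}$ (yielding $\ln\tfrac{u}{1-u}$) and $h_{3}$ (yielding $\ln\tfrac{1-u}{u}$), while $h_{1}$ needs no condition. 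Since $h$ is decreasing with $h(+\infty)=0$, both factors are also positive. Then the product rule
\[
g''=h_{o}''\,h+2\,h_{o}'\,h'+h_{o}\,h''
\]
has every term on the right positive, and strict convexity follows in one line. All second derivatives needed are of functions of the type $\tau^{2}e^{-\tau}$ or $1/(c\pm d\,e^{-\tau})$, which are trivial; no quartic in $e^{\tau}$ ever appears. Your sum decomposition, while natural for exhibiting $g>0$, destroys this product structure by folding the $\Delta t^{2}$ prefactor into each piece, and that is what forces you onto the harder route.
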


The proof of this lemma can be found in the Appendix. Using this lemma we next derive tight lower and upper bounds of the Fisher information.

\subsection{A tight lower bound on the Fisher information}
\begin{lem}\label{lem:minimum}
For any $n\in \mathbb{N}, n\geq 1$, $T\in \mathbb{R},T> (n+1) \alpha u/ \theta_0$, and $\alpha u/ \theta_0 < \Delta t < T-n \alpha u/\theta_0$, 
the function $G(\Delta t)=ng(\frac{T-\Delta t}{n})+g(\Delta t)$ has a minimum of $(n+1)g(\frac{T}{n+1})$ attained at $\Delta t=\frac{T}{n+1}$.
\end{lem}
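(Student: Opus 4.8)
The plan is to recognize $G$ as $(n+1)$ times a two–point convex combination of values of $g$, and then to invoke Jensen's inequality together with the strict convexity of $g$ established in Lemma~\ref{lem:convexity}. The first thing I would do is check that both arguments of $g$ occurring in $G$ lie in the region $(\alpha u/\theta_0,\infty)$ on which Lemma~\ref{lem:convexity} applies. The hypothesis $\alpha u/\theta_0<\Delta t$ handles the argument $\Delta t$ directly. For the argument $(T-\Delta t)/n$, the upper restriction $\Delta t<T-n\alpha u/\theta_0$ gives $T-\Delta t>n\alpha u/\theta_0$, hence $(T-\Delta t)/n>\alpha u/\theta_0$. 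So $g$ is strictly convex at both points and on the entire segment joining them.

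Next I would introduce the weights $\lambda_1=\tfrac{1}{n+1}$ and $\lambda_2=\tfrac{n}{n+1}$, with $\lambda_1+\lambda_2=1$ and $\lambda_1,\lambda_2\in(0,1)$ for $n\ge 1$, applied to the points $x_1=\Delta t$ and $x_2=\frac{T-\Delta t}{n}$. Their convex combination is
\[
\lambda_1 x_1+\lambda_2 x_2=\frac{\Delta t}{n+1}+\frac{n}{n+1}\cdot\frac{T-\Delta t}{n}=\frac{T}{n+1}.
\]
Strict convexity of $g$ then yields
\[
g\!\left(\frac{T}{n+1}\right)\le \frac{1}{n+1}\,g(\Delta t)+\frac{n}{n+1}\,g\!\left(\frac{T-\Delta t}{n}\right)=\frac{1}{n+1}\,G(\Delta t),
\]
with equality if and only if $x_1=x_2$, i.e. $\Delta t=\frac{T-\Delta t}{n}$, equivalently $\Delta t=\frac{T}{n+1}$. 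Multiplying through by $n+1$ gives $G(\Delta t)\ge (n+1)\,g\!\left(\frac{T}{n+1}\right)$ for every admissible $\Delta t$, with equality exactly at $\Delta t=\frac{T}{n+1}$. Evaluating directly, $G\!\left(\frac{T}{n+1}\right)=n\,g\!\left(\frac{T}{n+1}\right)+g\!\left(\frac{T}{n+1}\right)=(n+1)\,g\!\left(\frac{T}{n+1}\right)$, confirming the claimed minimum value. I would also note that $\frac{T}{n+1}$ is interior to the stated interval: $T>(n+1)\alpha u/\theta_0$ rearranges both to $\frac{T}{n+1}>\alpha u/\theta_0$ and to $\frac{T}{n+1}<T-n\alpha u/\theta_0$.

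The argument is essentially routine once Lemma~\ref{lem:convexity} is available; the only point requiring genuine care is the domain bookkeeping — verifying that the rescaled argument $\frac{T-\Delta t}{n}$ never leaves the strictly-convex region, which is precisely what the two-sided restriction on $\Delta t$ guarantees, and which also explains why the hypotheses are stated as they are. As an alternative to Jensen one could argue by calculus: since $g$ is smooth for positive arguments, $G'(\Delta t)=g'(\Delta t)-g'\!\left(\frac{T-\Delta t}{n}\right)$, and because $g'$ is strictly increasing on the convex region it is injective, so the unique stationary point solves $\Delta t=\frac{T-\Delta t}{n}$; strict convexity of $G$ (a positive combination of strictly convex functions) then identifies that stationary point as the global minimizer on the interval.
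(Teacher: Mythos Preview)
Your proof is correct, and your primary argument via Jensen's inequality is a genuinely different route from the paper's. The paper proceeds by calculus: it sets $G'(\Delta t)=0$, observes that this reduces to $g'(\Delta t)=g'\!\left(\frac{T-\Delta t}{n}\right)$, invokes strict monotonicity of $g'$ (from Lemma~\ref{lem:convexity}) to get the unique stationary point $\Delta t=\frac{T}{n+1}$, and then checks $G''>0$ there to conclude it is the minimum. Your Jensen argument is cleaner and more direct: once you recognize $G(\Delta t)/(n+1)$ as a two-point convex combination of $g$-values whose barycenter is always $\frac{T}{n+1}$, the inequality and the equality case drop out in one line, with no differentiation needed. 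The domain bookkeeping you carry out---verifying that both $\Delta t$ and $\frac{T-\Delta t}{n}$ stay in $(\alpha u/\theta_0,\infty)$ so that the strict-convexity region contains the full segment---is exactly what is required for Jensen to apply, and you handle it correctly. The calculus alternative you sketch at the end is essentially the paper's own argument, so you have in fact covered both approaches.
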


\begin{proof}
Setting the first derivative of $G$ to zero and solving for $\Delta t$ results in solving the equation $g^{'}(\Delta t) = g^{'}(\frac{T-\Delta t}{n})$.  Since the arguments on both side satisfy Condition 1, by the assumption of the lemma, $g$ is strictly convex according to Lemma 1 and $g^{'}$ is a strictly monotonic function. Therefore there exists a unique solution within the range of $(\alpha u/\theta_0, T-n\alpha u/\theta_0)$ to this equation at $\Delta t=\frac{T}{n+1}$.

Next we calculate the second derivative of $G$ at this point.
Since $G^{''}(\Delta t)=g^{''}(\Delta t)+\frac{1}{n}g^{''}(\frac{T-\Delta t}{n})$, we have $G^{''}(\frac{T}{n+1})=(1+\frac{1}{n})g^{''}(\frac{T}{n+1})$. Since $T> (n+1)\alpha u/\theta_0$, $g$ is convex at this stationary point by Lemma 1.  Hence $G$ is convex at this point and it is thus a global minimum within the range $(\alpha u/\theta_0, T-n\alpha u/\theta_0)$; the minimum value is $(n+1)g(\frac{T}{n+1})$, completing the proof.
\end{proof}

\begin{thm}
\label{thm:min}
Consider a period of time $[0, T]$, in which we wish to schedule $m\geq3$ sampling points, including one at time $0$ and one at time $T$.  Denote the sequence of time spacings between these samples as $\underline{\Delta t} = [\Delta t_2, \Delta t_3, \cdots, \Delta t_m]$, where $\sum_{i=2}^{m} \Delta t_i= T$.  For a given sequence $\underline{\Delta t}$, define the Fisher information $I(\theta_0)$ as in Eqn (\ref{eqn:fisher-information}) and rewrite it as $I(\theta_0; \underline{\Delta t})$ to emphasize its dependence on $\underline{\Delta t}$.  Assuming $T>(m-1)\alpha u/\theta_0$, then we have
\begin{eqnarray*}
 \min_{\underline{\Delta t} \in {\cal A}_m}  I(\theta_0; \underline{\Delta t})
= (m-1) g(\frac{T}{m-1}),
\end{eqnarray*}
where ${\cal A}_m = \{\Delta t_i: \sum_{i=2}^{m} \Delta t_i = T, \Delta t_i > \alpha u/\theta_0, i=2, \cdots, m\}$, and
with the minimum achieved at $\Delta t_{i}=\frac{T}{m-1}, i= 2, \cdots, m$.
\end{thm}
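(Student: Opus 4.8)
The plan is to exploit the additive structure of the Fisher information recorded in the excerpt: for any feasible sequence, $I(\theta_0;\underline{\Delta t})=\sum_{i=2}^{m}g(\Delta t_i)$ with $g$ the Fisher function of Eqn (\ref{eqn:fisher-function}). On $\mathcal{A}_m$ every spacing $\Delta t_i$ lies in $(\alpha u/\theta_0,\infty)$, which by Lemma \ref{lem:convexity} is precisely the region on which $g$ is strictly convex. First I would check that the proposed optimizer is admissible: the hypothesis $T>(m-1)\alpha u/\theta_0$ gives $T/(m-1)>\alpha u/\theta_0$, so the uniform spacing $\Delta t_i=T/(m-1)$ indeed lies in $\mathcal{A}_m$. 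Thus the problem reduces to minimizing a sum of copies of a strictly convex function over a simplex whose coordinatewise domain is convex.

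The most economical route is Jensen's inequality: for $\Delta t_2,\dots,\Delta t_m\in(\alpha u/\theta_0,\infty)$,
\[
\frac{1}{m-1}\sum_{i=2}^{m}g(\Delta t_i)\ \ge\ g\!\Big(\frac{1}{m-1}\sum_{i=2}^{m}\Delta t_i\Big)=g\!\Big(\frac{T}{m-1}\Big),
\]
so $I(\theta_0;\underline{\Delta t})\ge (m-1)\,g(T/(m-1))$, and strict convexity upgrades this to a strict inequality unless all $\Delta t_i$ are equal, which (given their sum is $T$) forces $\Delta t_i=T/(m-1)$. Since that point is feasible, both the value of the minimum and its uniqueness follow at once.

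I would also present the argument in the form that uses Lemma \ref{lem:minimum} directly, since that lemma is stated with a general $n$ for exactly this purpose, by induction on $m$. The base case $m=3$ is Lemma \ref{lem:minimum} with $n=1$ (its hypothesis $T>2\alpha u/\theta_0$ is the $m=3$ instance of the theorem's hypothesis). For the inductive step, fix $\Delta t_2$; then $[\Delta t_3,\dots,\Delta t_m]$ is a feasible sequence of $m-2$ spacings summing to $T-\Delta t_2$, and $T-\Delta t_2>(m-2)\alpha u/\theta_0$ because each of those spacings exceeds $\alpha u/\theta_0$. The induction hypothesis gives $\sum_{i=3}^{m}g(\Delta t_i)\ge (m-2)\,g\big(\frac{T-\Delta t_2}{m-2}\big)$, hence $I(\theta_0;\underline{\Delta t})\ge g(\Delta t_2)+(m-2)\,g\big(\frac{T-\Delta t_2}{m-2}\big)=G(\Delta t_2)$ with $G$ as in Lemma \ref{lem:minimum} for $n=m-2$. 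Since $\Delta t_2\in(\alpha u/\theta_0,\,T-(m-2)\alpha u/\theta_0)$ and $T>(m-1)\alpha u/\theta_0$, Lemma \ref{lem:minimum} yields $G(\Delta t_2)\ge (m-1)\,g(T/(m-1))$ with equality iff $\Delta t_2=T/(m-1)$, which closes the induction and pins down uniform sampling as the unique minimizer.

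The one genuinely delicate point — and the step I expect to be the main obstacle — is the bookkeeping that keeps every argument of $g$ inside the sparse region $(\alpha u/\theta_0,\infty)$ at every stage, so that Lemma \ref{lem:convexity} (and with it Jensen, or Lemma \ref{lem:minimum}) is legitimately invoked; this is exactly where the hypothesis $T>(m-1)\alpha u/\theta_0$ gets consumed. A smaller point worth a sentence is that $\mathcal{A}_m$ is relatively open (the constraints $\Delta t_i>\alpha u/\theta_0$ are strict), so one cannot appeal to compactness to assert that a minimizer exists; the arguments above avoid this by \emph{exhibiting} the minimizer in the interior and proving the matching lower bound directly.
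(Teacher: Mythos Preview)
Your proposal is correct, and your second (inductive) argument is essentially the paper's proof: the paper also inducts on $m$, fixes one spacing (they isolate $\Delta t_{m+1}$ rather than $\Delta t_2$, which is immaterial by symmetry), applies the induction hypothesis to the remaining $m-1$ spacings summing to $T-\Delta t_{m+1}$, and then invokes Lemma~\ref{lem:minimum} with $n=m-1$ to optimize over the last coordinate. Your Jensen argument, on the other hand, is a genuinely shorter route that the paper does not take: once Lemma~\ref{lem:convexity} gives strict convexity on $(\alpha u/\theta_0,\infty)$, a single application of Jensen over the $(m-1)$-point average yields both the lower bound and (via strict convexity) the uniqueness of the equal-spacing minimizer, bypassing Lemma~\ref{lem:minimum} and the induction entirely. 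The induction-plus-Lemma~\ref{lem:minimum} approach has the minor advantage of making explicit the two-variable tradeoff $G(\Delta t)=n\,g((T-\Delta t)/n)+g(\Delta t)$, which the paper reuses verbatim for the upper bound in Theorem~\ref{thm:max}; your Jensen shortcut is cleaner here but does not transfer to the maximization problem. Your remarks about keeping all arguments inside the sparsity region and about $\mathcal{A}_m$ being relatively open are apt; the paper handles the first implicitly through the constraint set and does not comment on the second.
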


\begin{proof}
We prove this by induction on $m$.

\emph{Induction basis:} For $m=3$,
\begin{eqnarray*}
I(\theta_{0}; \underline{\Delta t})=g(\Delta t_2) + g(\Delta t_3).
\end{eqnarray*}
Using Lemma $1$ in the special case of $n=1$ the result follows.

\emph{Induction step:} Suppose the result holds for $3, 4,\dots m$, we want to show it also holds for $m+1$ for $T>m \alpha u/\theta_0$.
Note that in this case $\underline{\Delta t} \in {\cal A}_{m+1}$ implies that $\alpha u/\theta_0 < \Delta t_{m+1} < T-(m-1)\alpha u/\theta_0$, which will be denoted as $\Delta t_{m+1}\in {\cal A}_{m+1}$ below for convenience.  We thus have
\begin{eqnarray}
&& \min_{\underline{\Delta t} \in {\cal A}_{m+1}} \{I(\theta_{0}; \underline{\Delta t})\} \nonumber \\
&=& \min_{\underline{\Delta t} \in {\cal A}_{m+1}} \left\{\sum_{i=2}^{m}g(\Delta t_{i})+g(\Delta t_{m+1}) \right\} \nonumber \\
&=& \min_{\Delta t_{m+1}\in {\cal A}_{m+1}} \Bigg{\{} \min_{\sum \Delta t_i= T-\Delta t_{m+1}}
\Bigg{\{} \sum_{i=2}^{m}g(\Delta t_{i}) \Bigg{\}}+ g(\Delta t_{m+1}) \Bigg{\}}   \nonumber \\
&=& \min_{\Delta t_{m+1}\in {\cal A}_{m+1}} \left\{ (m-1)g(\frac{T-\Delta t_{m+1}}{m-1})+g(\Delta t_{m+1}) \right\} \nonumber \\
&=& m g(\frac{T}{m}) ~, \nonumber
\end{eqnarray}
where the third equality is due to the induction hypothesis and the first term on the RHS is obtained at $\Delta t_{i}=\frac{T-\Delta t_{m+1}}{m-1}$, $i=2, \dots, m$.  The last equality invokes Lemma \ref{lem:minimum} in the special case of $n=m-1$, and is obtained at $\Delta t_{m+1}=\frac{T}{m}$.  Combining these we conclude that the minimum value of Fisher information is $mg(\frac{T}{m})$, when $\Delta t_{i}=\frac{T}{m}, i=2, \dots, m+1$. Thus the case $m+1$ also holds, completing the proof.
\end{proof}

Theorem 1 states that given the total sensing period $T$ and the total number of samples $m$, provided that the sampling is done sparsely (with sufficiently large sampling intervals as defined in Condition \ref{cond:sparsity}), the Fisher information attains its minimum when all sampling intervals have the same value, i.e when using a uniform sensing schedule.  In this sense uniform sensing is the {\em worst} possible sensing scheme; any deviation from it, while keeping the same average sampling interval $T/(m-1)$, can only increase the Fisher information.   As we have seen in Figure \ref{fig:quick-comparison}, this increase in Fisher information becomes more significant when sampling gets sparser, i.e., when $m$ decreases.


\subsection{A tight upper bound on the Fisher information}
The derivation of the upper bound follows very similar steps as those for the lower bound.

\begin{lem}\label{lem:maximum}
For any  $T\in \mathbb{R},T>2\alpha u/\theta_0$, and $\alpha u/\theta_0<\Delta t < T- \alpha u/\theta_0$, the function $F(\Delta t)=g(T-\Delta t)+g(\Delta t)$ has a maximum of $g(\alpha u/\theta_0)+g(T-\alpha u/\theta_0)$ attained at $\Delta t=\alpha u/\theta_0$ or  $\Delta t=T-\alpha u/\theta_0$.
\end{lem}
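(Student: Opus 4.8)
\textbf{Proof proposal for Lemma~\ref{lem:maximum}.}
The plan is to mirror the argument used for Lemma~\ref{lem:minimum}, but exploiting the fact that a convex function on a bounded interval is maximized at its endpoints rather than at an interior stationary point. First I would observe that, for $\Delta t$ ranging over $(\alpha u/\theta_0,\ T-\alpha u/\theta_0)$ with $T>2\alpha u/\theta_0$, both arguments $\Delta t$ and $T-\Delta t$ strictly exceed $\alpha u/\theta_0$, so Condition~\ref{cond:sparsity} is satisfied by each. By Lemma~\ref{lem:convexity}, $g$ is strictly convex at both points, hence $F''(\Delta t)=g''(\Delta t)+g''(T-\Delta t)>0$ on the whole open interval. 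Since $g$ is smooth, $F$ and $F''$ extend continuously to the closed interval $[\alpha u/\theta_0,\ T-\alpha u/\theta_0]$, and at each endpoint one of the two arguments still strictly exceeds $\alpha u/\theta_0$, so $F''>0$ there as well; thus $F$ is strictly convex on the closed interval.

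Next I would locate the stationary point, exactly as in the proof of Lemma~\ref{lem:minimum}. Setting $F'(\Delta t)=g'(\Delta t)-g'(T-\Delta t)=0$ and using that $g'$ is strictly monotone on the relevant range (again by Lemma~\ref{lem:convexity}), the unique solution is $\Delta t=T-\Delta t$, i.e.\ $\Delta t=T/2$. Because $F$ is strictly convex, this point is the global \emph{minimum} of $F$ on the interval, not a maximum. A strictly convex function on a closed bounded interval attains its maximum only at the two endpoints, so $\max F$ is achieved at $\Delta t=\alpha u/\theta_0$ or $\Delta t=T-\alpha u/\theta_0$. Finally, the symmetry $F(\Delta t)=F(T-\Delta t)$ shows the two endpoint values coincide and equal $g(\alpha u/\theta_0)+g(T-\alpha u/\theta_0)$, the claimed maximum. (Equivalently, writing any interior $\Delta t$ as a convex combination of the two endpoints and applying convexity together with $F(\alpha u/\theta_0)=F(T-\alpha u/\theta_0)$ gives the same conclusion directly.)

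The only delicate point I expect to need care is the interface between the open interval appearing in the hypothesis and the closed interval needed for the ``maximum at the endpoints'' conclusion: on the open interval a strictly convex function has no maximum, only a supremum, so the statement is really about the closed interval. Establishing that $g$ and $g''$ are well defined and that $F''>0$ persists at $\Delta t=\alpha u/\theta_0$ and $\Delta t=T-\alpha u/\theta_0$ — which holds because the complementary argument there equals $T-\alpha u/\theta_0>\alpha u/\theta_0$ and hence lies strictly inside the region of strict convexity guaranteed by Lemma~\ref{lem:convexity} — closes this gap. Everything else is a direct transcription of the convexity and monotonicity machinery already set up for Lemma~\ref{lem:minimum} and Theorem~\ref{thm:min}.
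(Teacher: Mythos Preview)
Your proposal is correct and follows essentially the same route as the paper: establish that $F$ is convex on the interval (the paper does this by noting $F'(\Delta t)=g'(\Delta t)-g'(T-\Delta t)$ is monotone since $g'$ is, you do it equivalently via $F''>0$), conclude that the maximum lies at the endpoints, and use the symmetry $F(\Delta t)=F(T-\Delta t)$ to evaluate it. Your explicit identification of $\Delta t=T/2$ as the unique stationary point is not needed for the maximum claim and the paper omits it; conversely, your careful handling of the open-versus-closed interval issue is a point the paper simply glosses over.
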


\begin{proof}
Firstly we prove that $F$ is convex under the stated conditions.  We have
\begin{displaymath}
F^{'}(\Delta t)=g^{'}(\Delta t)-g^{'}(T-\Delta t) ~.
\end{displaymath}
Since $g$ is strictly convex under the stated conditions, by Lemma \ref{lem:convexity} $g^{'}$ is monotonic increasing. Thus $F^{'}$ is also monotonic increasing, hence $F$ is convex. It follows that the maximum of $F(\Delta t)$ is attained at one and/or the other extreme point of $\Delta t$.  In either case we have
%
%
\begin{displaymath}
F(\alpha u/\theta_0)=F(T-\alpha u/\theta_0)=g(\alpha u/\theta_0)+g(T-\alpha u/\theta_0).
\end{displaymath}
\end{proof}

\begin{thm}
\label{thm:max}
Consider a period of time $[0, T]$, in which we wish to schedule $m\geq3$ sampling points, including one at time $0$ and one at time $T$.  Denote the sequence of time spacings between these samples as $\underline{\Delta t} = [\Delta t_2, \Delta t_3, \cdots, \Delta t_m]$, where $\sum_{i=2}^{m} \Delta t_i= T$. Assuming $T>(m-1)\alpha u/\theta_0$, then we have
\begin{eqnarray*}
\max_{\underline{\Delta t} \in {\cal A}_m}  I(\theta_0; \underline{\Delta t})
= (m-2) g(\alpha u/\theta_0)+g(T-(m-2)\alpha u/\theta_0),
\end{eqnarray*}
where ${\cal A}_m = \{\Delta t_i: \sum_{i=2}^{m} \Delta t_i = T, \Delta t_i > \alpha u/\theta_0, i=2, \cdots, m\}$, and
with the maximum achieved at $\Delta t_{i}=\alpha u/\theta_0, i= 2, \cdots, m-1$ and $\Delta t_{m}=T-(m-2)\alpha u/\theta_0$.

\end{thm}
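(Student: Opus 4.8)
The plan is to follow exactly the induction-on-$m$ scheme used to prove Theorem~\ref{thm:min}, but with the maximization lemma (Lemma~\ref{lem:maximum}) playing the role that Lemma~\ref{lem:minimum} played there. The underlying reason both arguments work is the same: by Lemma~\ref{lem:convexity} the Fisher function $g$ is strictly convex on the admissible range, so any partial sum $g(a)+g(b)$ with $a+b$ held fixed is a convex function of $a$ and is therefore extremized --- a minimum in the interior, a maximum at the boundary --- over the feasible region for $(a,b)$. Theorem~\ref{thm:min} exploited the interior minimum; here we exploit the boundary maximum.

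\emph{Base case.} For $m=3$ I would write $I(\theta_0;\underline{\Delta t})=g(\Delta t_2)+g(\Delta t_3)$ subject to $\Delta t_2+\Delta t_3=T$ and $\Delta t_2,\Delta t_3>\alpha u/\theta_0$, and invoke Lemma~\ref{lem:maximum} with $F(\Delta t)=g(T-\Delta t)+g(\Delta t)$. This yields the maximum value $g(\alpha u/\theta_0)+g(T-\alpha u/\theta_0)$, which coincides with $(m-2)g(\alpha u/\theta_0)+g(T-(m-2)\alpha u/\theta_0)$ since $m-2=1$, attained at $\Delta t_2=\alpha u/\theta_0$, $\Delta t_3=T-\alpha u/\theta_0$ (or the symmetric assignment).

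\emph{Induction step.} Assuming the claim for $3,\dots,m$ and taking $T>m\alpha u/\theta_0$, I would split the maximization over ${\cal A}_{m+1}$ as in the proof of Theorem~\ref{thm:min}: an outer maximization over the last interval $\Delta t_{m+1}$ and an inner maximization of $\sum_{i=2}^{m}g(\Delta t_i)$ over $\sum_{i=2}^m\Delta t_i=T-\Delta t_{m+1}$. Feasibility of the inner problem forces $\alpha u/\theta_0<\Delta t_{m+1}<T-(m-1)\alpha u/\theta_0$, and on this range the window length $T-\Delta t_{m+1}$ exceeds $(m-1)\alpha u/\theta_0$, so the induction hypothesis applies and gives the inner maximum $(m-2)g(\alpha u/\theta_0)+g\big(T-\Delta t_{m+1}-(m-2)\alpha u/\theta_0\big)$. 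The outer problem is then to maximize, over $\Delta t_{m+1}\in(\alpha u/\theta_0,\,T-(m-1)\alpha u/\theta_0)$, the quantity
\[
(m-2)g(\alpha u/\theta_0)+g\big((T-(m-2)\alpha u/\theta_0)-\Delta t_{m+1}\big)+g(\Delta t_{m+1}) .
\]
Dropping the constant first term, the remaining two terms are precisely the function $F$ of Lemma~\ref{lem:maximum} with effective window $T'=T-(m-2)\alpha u/\theta_0$; the hypothesis $T>m\alpha u/\theta_0$ gives $T'>2\alpha u/\theta_0$, so Lemma~\ref{lem:maximum} applies and locates the maximum at $\Delta t_{m+1}=\alpha u/\theta_0$ (equivalently at $T-(m-1)\alpha u/\theta_0$), with value $g(\alpha u/\theta_0)+g(T-(m-1)\alpha u/\theta_0)$. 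Adding back $(m-2)g(\alpha u/\theta_0)$ gives $(m-1)g(\alpha u/\theta_0)+g(T-(m-1)\alpha u/\theta_0)$, the asserted value for $m+1$ samples; tracing back the optimizers gives $\Delta t_i=\alpha u/\theta_0$ for $i=2,\dots,m$ and $\Delta t_{m+1}=T-(m-1)\alpha u/\theta_0$ up to relabeling, completing the induction.

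I do not expect a genuine obstacle --- the argument is structurally identical to Theorem~\ref{thm:min} --- but the step to be careful with is the range bookkeeping: one must verify that peeling off $\Delta t_{m+1}$ leaves a window still long enough for the induction hypothesis ($T-\Delta t_{m+1}>(m-1)\alpha u/\theta_0$) and that the reduced window $T'$ satisfies the precondition $T'>2\alpha u/\theta_0$ of Lemma~\ref{lem:maximum}, both of which reduce to $T>m\alpha u/\theta_0$. One further technicality worth a sentence is that ${\cal A}_m$ is open, so the stated ``maximum'' is really a supremum attained on the closure $\{\Delta t_i\ge\alpha u/\theta_0\}$ by continuity of $g$; this should be acknowledged explicitly, exactly as it is implicitly in Theorem~\ref{thm:min}.
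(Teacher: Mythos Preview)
Your proposal is correct and follows essentially the same induction-on-$m$ argument as the paper: the base case $m=3$ via Lemma~\ref{lem:maximum}, then peeling off $\Delta t_{m+1}$, applying the induction hypothesis to the inner $(m-1)$-interval problem, and invoking Lemma~\ref{lem:maximum} again for the outer one-variable maximization. Your additional remarks on the range bookkeeping and the open-set/supremum technicality are well taken and in fact slightly more careful than the paper's own presentation.
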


\begin{proof}
We prove this by induction on $m$.

\emph{Induction basis:} For $m=3$,
$I(\theta_{0}; \underline{\Delta t})=g(\Delta t_2) + g(\Delta t_3)$.
Using Lemma \ref{lem:maximum} the result immediately follows.

\emph{Induction step:} Suppose the result holds for $3, 4,\dots m$, we want to show it also holds for $m+1$ for $T> m\alpha u/\theta_0$.
Again in this case $\underline{\Delta t} \in {\cal A}_{m+1}$ implies that $\alpha u/\theta_0 < \Delta t_{m+1} < T-(m-1)\alpha u/\theta_0$, which will be denoted as $\Delta t_{m+1}\in {\cal A}_{m+1}$ for convenience. We thus have
\begin{eqnarray}
&&\max_{\underline{\Delta t} \in {\cal A}_{m+1}} \{I(\theta_{0}; \underline{\Delta t})\} \nonumber \\
&=& \max_{\underline{\Delta t} \in {\cal A}_{m+1}} \left\{\sum_{i=2}^{m}g(\Delta t_{i})+g(\Delta t_{m+1}) \right\} \nonumber \\
&=& \max_{\Delta t_{m+1}\in {\cal A}_{m+1}} \Bigg{\{}  \max_{\sum \Delta t_i= T-\Delta t_{m+1}}
\left\{ \sum_{i=2}^{m}g(\Delta t_{i}) \right\}+g(\Delta t_{m+1}) \Bigg{\}}   \nonumber \\
&=& \max_{\Delta t_{m+1}\in {\cal A}_{m+1}} \Bigg{\{} (m-2)g(\alpha u/\theta_0)+g(T-\Delta t_{m+1}-(m-2)\alpha u/\theta_0)+g(\Delta t_{m+1}) \Bigg{\}} \nonumber \\
&=& (m-1) g(\alpha u/\theta_0)+g(T-(m-1)\alpha u/\theta_0) ~, \nonumber
\end{eqnarray}
where the third equality is due to the induction hypothesis and the first term on the RHS is obtained at $\Delta t_{i}= \alpha u/\theta_0$, $i=2, \dots, m-1$ and $\Delta t_{m}=T-\Delta t_{m+1}-(m-2)\alpha u/\theta_0$.  The last equality invokes Lemma \ref{lem:maximum}, and is obtained at $\Delta t_{m+1}=T-(m-1)\alpha u/\theta_0$ or $\Delta t_{m+1}=\alpha u/\theta_0$.  Thus the case $m+1$ also holds, completing the proof.
\end{proof}

We see from this theorem that under the sparsity condition, the best sensing sequence is to sample at the smallest interval that the condition would allow, till we use all the $m-2$ samples we have the freedom of placing.  This produces a uniform sequence of sampling times except for the last one.  It can be shown that if we remove the constraint of having a window of $T$, but rather seek to optimally place $m$ points subject to the sparsity condition, then the optimal sequence would be exactly uniform with the interval $\Delta t_i= \alpha u/\theta_0$.
However, since $\theta_0$ is the very thing we are trying to estimate, it would be unreasonable to suggest that this optimal interval is known a priori.  Therefore, this optimal sequence, while exists, is not in general implementable.

\subsection{Best and worst sampling schemes without the sparsity condition}

 The preceding upper- and lower-bound achieving sensing sequences were derived under the sparsity Condition 1.
Below we show how to obtain the best and worst sensing sequences in a more general setting, without the requirement of Condition 1, via the use of dynamic programming.  
While this result is more general compared to those derived under the sparsity condition, structurally they are not as easy to identify and are thus given in a numerical form.
These sequences are also not practically implementable as they also assume the a priori knowledge of the parameters to be estimated.

Denote by $\pi$ a sampling policy given by the time sequence $\{ t_1,  t_1, \cdots,  t_m\}$. Then the optimal sampling policy is given by
\begin{eqnarray}
\pi^*=\arg\max_{\pi \in {\Pi}} I(\theta_0) ~,
\end{eqnarray}
where the set of admissible policies $\Pi = \{t_i: t_1 = 0, t_m = T, 0< t_2 < \cdots < t_{m-1} < T\}$.

The maximum $I(\theta_0)$ can be recursively solved through the set of dynamic programming equations given below:
\begin{eqnarray}
\label{DP_value}
V(1, t) &=& g(T-t), ~~  \forall ~0\leq t < T;  \nonumber \\
V(k, t) &=& \max_{t < x < T}[g(x-t)+V(k-1, x)], ~~ \forall ~ 0 \leq t < T, ~k=2,3,\cdots, m-1 ~,
\end{eqnarray}
and
\begin{eqnarray}
\max I(\theta_0) &=& \max_{0 < t < T}[g(t)+V(m-1, t)] ~.
\end{eqnarray}
Here the value function $V(k, t)$ denotes the maximum achievable Fisher
information given we last sampled at time $t$, with $k$ points remaining to be
placed between $(t,T]$.

Note that since $t$ is continuous, the pair $(k, t)$ has an uncountable state space.  In computing the DP equation (\ref{DP_value}) we discretize $t$ and $T$ into small steps and require that both be integer multiples of this small quantity.  The resulting DP has a finite state space and can be solved backwards in time in a standard manner.

It is straightforward to see the exact same procedure can be used to find the sampling sequence that {\em minimizes} the Fisher information, thus giving the worst sampling sequence.  It turns out that the worst sampling sequence in this case coincides with the worst sequence derived under the sparsity condition, i.e., it is also the uniform sequence.

\subsection{A comparison}

We now compare the different sensing sequences we obtained in this section using an example.
They are illustrated in Figure \ref{fig:sam_point-a}.
In this example the channel
parameters are $E[T_0]=5$ and $E[T_1]=3$ time units, respectively.
The time window is set to be $40$ time units, and the channel can only be sensed $5$ times.
Shown in the figure are the uniform sensing sequence,
the best/worst sensing sequences derived under the sparsity condition, and the best/worst
sequences derived using dynamic programming.
As mentioned earlier, the worst obtained via dynamic programming coincides with the
uniform sampling sequence.  The worst under the sparsity condition also coincides with the uniform sequence, a fact proven in Theorem \ref{thm:min}, as the sparsity condition holds in this case.
In Figure \ref{fig:sam_point-b}, we compared the performance of these sampling strategies, by setting the time window to $5000$ time units.  The estimated value under each strategy is shown as a function of the number of samples taken.  The true value is also shown for comparison.  These are used as benchmarks in the next section in evaluating random sensing schemes.
\begin{figure}[htb]
\centering
\subfigure[Illustration of different sampling sequences]{\label{fig:sam_point-a}\includegraphics[width=0.45\textwidth,height=0.2\textwidth]{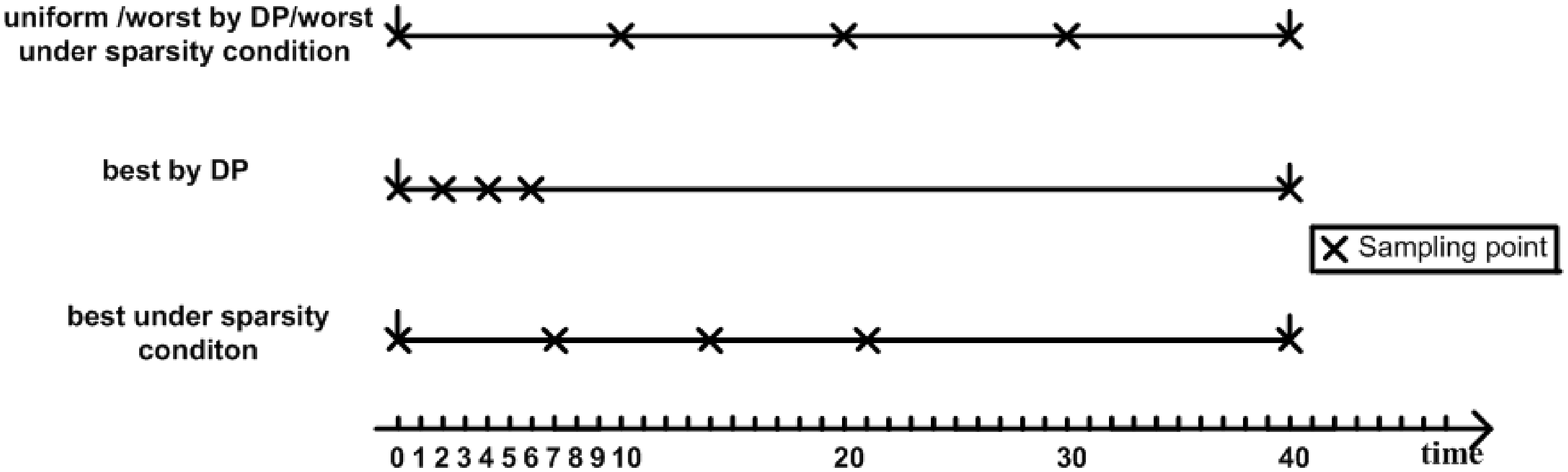}}
\hspace{-0.2in}
\centering
\subfigure[Performance comparison]{\label{fig:sam_point-b}\includegraphics[width=0.5\textwidth]{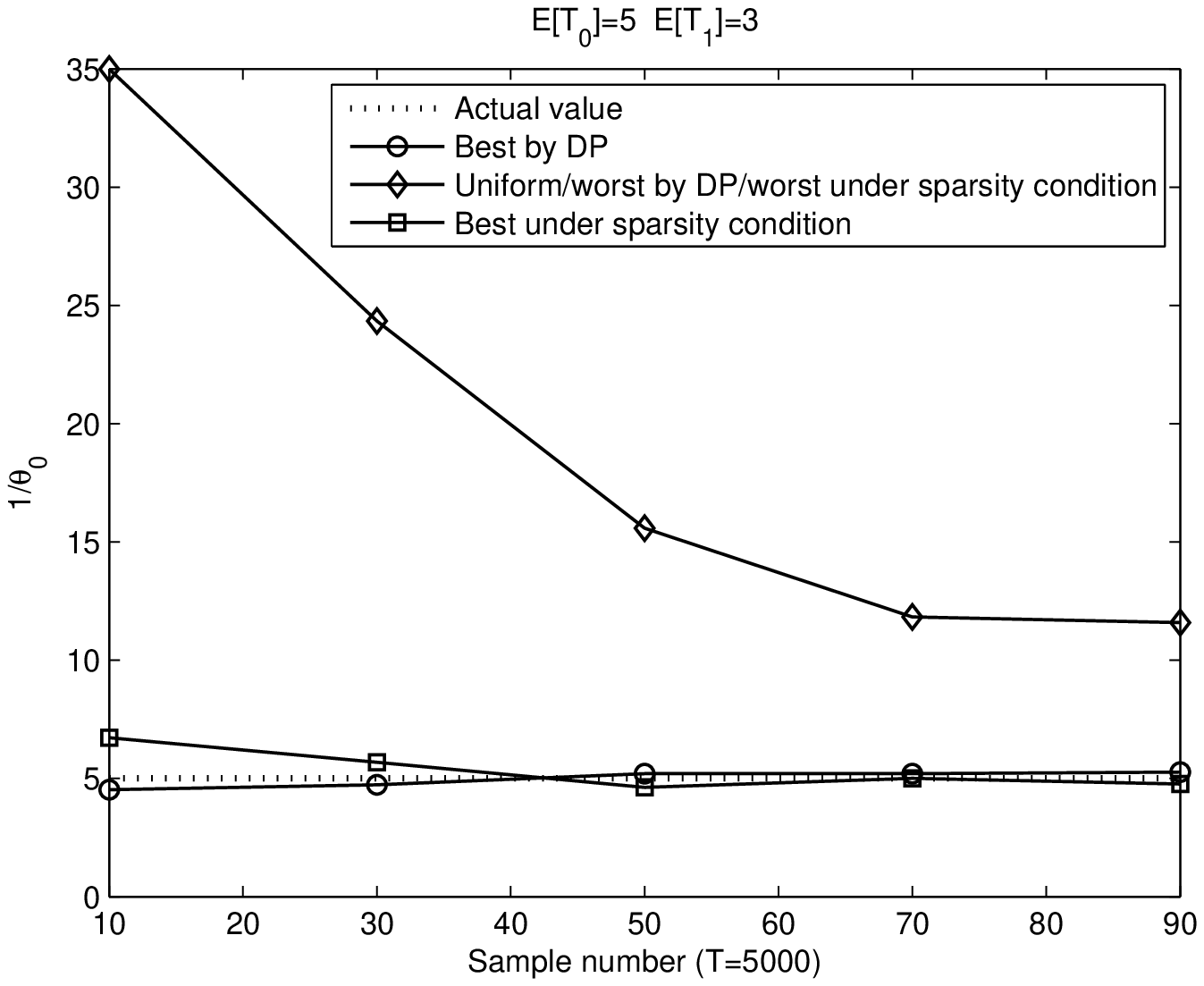}}
\centering
  \parbox{8cm}{\caption{Comparison of different sampling sequence  }\label{fig:sam}}

\end{figure}

%
%
%

As we can see from Figure \ref{fig:sam_point-a}, the best sensing sequence produced by dynamic programming without the sparsity condition also appears to be uniform except for the last sample, as is the case with the best sequence under the sparsity condition\footnote{Note however that this conclusion is drawn empirically from a large amount of numerical experiment in the case of not requiring sparsity.  By contrast, under the sparsity condition the conclusion is drawn analytically in Theorem \ref{thm:max}.}.  The difference is that the former uses a smaller interval value that violates the sparsity condition.  As mentioned earlier, if we were to remove the requirement that one sample be placed at time $T$, then the optimal sequence of $m$ would appear to be uniform (again, this conclusion is drawn empirically in the case of no sparsity requirement, and precisely and analytically in the case of sparsity), with the optimal interval being the value that maximizes (\ref{eqn:fisher-function}).  Interestingly, the worst sequence is also uniform with or without the sparsity condition.

What this result suggests is that in the ideal case if we have a priori knowledge of the channel parameters, to maximize the Fisher information the best thing to do is indeed to sense uniformly.
The difficulty of course is that without this knowledge we have no way of deciding what the optimal interval should be, and uniform sensing would be a bad decision as it could turn out to be the worst with an unfortunate choice of the sampling interval.

In such cases, the robust thing to do is simply to sense randomly, so that with some probability we will have sampling intervals close to the actual optimum.  This is investigated in the next section.

\section{Random sensing} \label{sec:diff_random}


Under a random sensing scheme, the sampling intervals $\Delta t_i$ are generated according to some distribution $f(\Delta t)$ (this may be done independently or jointly).  Below we first analyze how the resulting Fisher information is affected, and then use a family of distributions generated by the circular $\beta$ ensemble to examine the performance of different distributions.

\subsection{Effect on the Fisher information}\label{subsec:analytical comparison}
We begin by examining the expectation of the Fisher function, averaged over randomly generated sampling intervals, calculated as follows:
\begin{eqnarray}
\label{fisherexp}
E[g(\Delta t)]&=& \int_{0}^{\infty}g(\Delta t)f(\Delta t)d\Delta t  \\ \nonumber
&=& \int_{0}^{\infty}[g(\mu_o)+g'(\mu_o)(\Delta t-\mu_o)\\ \nonumber
&&+\cdots+\frac{g^{(n)}(\mu_o)(\Delta t-\mu_o)^{n}}{n!}+\cdots]f(\Delta t)d\Delta t \\ \nonumber
&=&
g(\mu_o)+g'(\mu_o)\mu_{1}+\cdots+\frac{g^{(n)}(\mu_o)\mu_{n}}{n!}+\cdots
\end{eqnarray}
where the Taylor expansion is around the expected sampling interval $\mu_o = E[\Delta t]$, or $T/(m-1)$ for given window $T$ and $m$ number of samples taken, and   
$\mu_{n}= \int_{0}^{\infty}(\Delta t-\mu_o)^{n}f(\Delta t)d\Delta t$ is
the $n$th order central moment of $\Delta t$.

In order to have a fair comparison 
we will assume $T$ and $m$ are fixed, thus fixing the average sampling interval $\mu_o$ under different sampling schemes.  Also note that the value $g^{(n)}(\mu_o)$ is completely determined by the channel statistics and not the sampling sequence.
Consequently the expected value of the Fisher function is affected by the selection of a sampling scheme only through the higher order central moments of the distribution $f()$.
Note that the expectation of the Fisher function under uniform sampling with constant sampling interval $\mu_o$ is simply $g(\mu_o)$ (i.e., only the first term on the right hand side remains).  Therefore any random scheme would improve upon this if it results in a positive sum over the higher order terms.
While the above equation does not immediately lead to an optimal selection of a random scheme, it is possible to seek one from a family of distribution functions through optimization over common parameters.

Before we proceed with this in the next subsection, we compare the normal, uniform and exponential random sampling schemes using the above analysis. In Table \ref{moments} we  list the higher order central moments of normal, uniform and exponential distributions \footnote{For normal distribution the probability distribution function is cut off at zero and then renormalized.}.  It can be easily concluded that among these three choices the Fisher function has the largest expectation under the exponential distribution.
\begin{table}[h]
\caption{Higher central moments}
\renewcommand{\arraystretch}{1.5}
\centering
\begin{tabular}{|c|c|c|c|}
\hline
& Normal & Uniform & Exponential \\
\hline n is even &
$\frac{n!\sigma^{n}}{(\frac{n}{2})!2^{\frac{n}{2}}}$&
$\frac{\mu_o^{n}}{n+1}$ &\\
\cline{1-3}
n is odd & $0$ & $0$&  \raisebox{1.5ex}[0pt] {$\mu_o^{n}\sum_{k=0}^{n}\frac{(-1)^{k}n!}{k!}$}\\
\hline
\end{tabular}
\label{moments}
\end{table}

We further compare their performance in Fig. \ref{r_com} as we increase the number of samples $m$ over a window of $T=5000$ time units.   Our simulation is done in Matlab and uses a discrete time model; all time quantities are in the same time units.  
The maximum number of samples is $5000$; this is because the on/off periods are integers, so there is no reason to sample faster than once per unit of time.
The sampling intervals under the uniform sensing are $\lfloor T/(m-1) \rfloor$.  
The sampling times under random schemes are generated as follows.
We fix the window $T$ and take $m$ to be the average number of samples\footnote{The reason $m$ is only an average and not an exact requirement is because we cannot guarantee to have exactly $m$ samples within a window of $T$ if we generate sampling intervals randomly according to a given pdf.  By allowing $m$ to be an average we can simply require the pdf to have a mean of $T/(m-1)$.}.  We place the first and the last sampling times at time $0$ and $T$, respectively.  We then sequentially generate $\Delta t_2, \Delta t_2, \cdots$ according to the given pdf $f()$ with parameters normalized such that it has a mean (sampling interval) of $T/(m-1)$.  For each $\Delta t_i$ we generate we place a sampling point at time $\sum_{k=2}^{i} \Delta t_k$.  This process stops when this quantity exceeds $T$.  Note that under this procedure the last sampling interval will not be exactly according to $f()$ since we have placed a sampling point at time $T$.  However, this approximation seems unavoidable.  Alternatively we can allow $T$ to be different from one trial to another while maintaining the same average. As long as $T$ is sufficiently large this procedure does not affect the accuracy or the fairness of the comparison.
For each value of $m$, the result shown on the figure is the average of 100 randomly generated sensing schedules.
We see that exponential random sampling outperforms the other two; this is consistent with our earlier analysis on the Fisher information.

\begin{figure}[htb]
  \centering
  \includegraphics[width=0.5\textwidth]{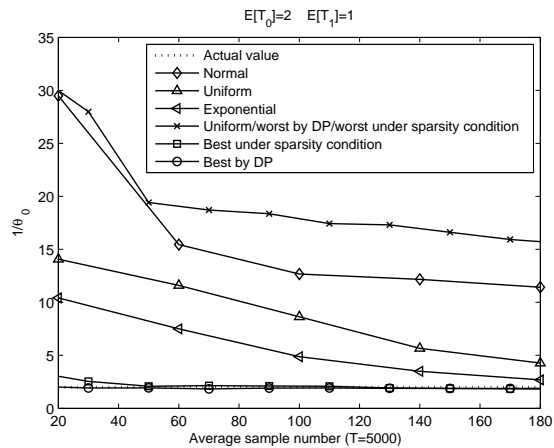}
    \centering
  \parbox{9cm}{\caption{Performance comparison of random sensing : Normal vs. Uniform vs. Exponential}
 \label{r_com}}
\end{figure}

%
\subsection{Circular $\beta$  ensemble}

We now use the circular $\beta$ ensemble \cite{Circular} to study a family of distributions. The advantage of using this ensemble is that with a single tunable parameter we can approximate a wide range of different distributions while keeping the same average sampling rate.

%
The circular $\beta$ ensemble may be viewed as given by $n$ eigenvalues, denoted as $\lambda_j=e^{i\theta_j}$, $j=1, \cdots, n$.  These eigenvalues have a joint probability density function proportional to the following:
\begin{equation}
\label{Jacobi}
\prod_{1\leq k<l\leq n}|e^{i\theta_k}-e^{i\theta_l}|^\beta,\ \   -\pi<\theta_j\leq\pi, \ \ j,k,l=1,\cdots,n,
\end{equation}
where $\beta>0$ is a model parameter.
%
In the special cases $\beta=1,2$ and 4, this ensemble describes the joint probability density of the eigenvalues of random orthogonal, unitary and sympletic matrices, respectively \cite{Circular}.

We use the set of eigenvalues generated from the above joint pdf to determine the placement of sample points in the interval $[0, T]$ in the following manner.  In \cite{Jacobi} a procedure is introduced to generate a set of values $\theta_j$, $j=1, 2, \cdots, n$ that follow the joint pdf given by (\ref{Jacobi}).   Setting $n=m$, these $n$ eigenvalues are then placed along a unit circle (each at the position given by $\theta_j$), which are subsequently mapped onto the line segment $[0, 1]$.
%
%
Scaling this segment to $[0, T]$ gives us the $m$ sampling times.
%
The intervals between these points now follow a certain joint distribution.  As $\beta$ varies  we can obtain a family of distributions indexed by $\beta$.  Below we will refer to this method of generating sample points/intervals as using the circular $\beta$ ensemble.
Note that by this procedure we cannot guarantee to have a sample taken at times $0$ and $T$, respectively.  However, since the window size $T$ and the number of samples $m$ are used, we maintained the same average sampling rate.

%

In Fig. \ref{beta_distribution} we give the pdfs of intervals generated by the circular ensemble with different $\beta$.   For each value of $\beta$, We use the generating method in \cite{Jacobi} to obtain $200$ random variables in [0, 1], then scale them to be in [0, 5000]. The  successive intervals between neighboring points are collected with the their pdf shown in the figure. We can see that as $\beta$ approaches $0^{+}$ the pdf becomes exponential-like and as $\beta$ approaches $+\infty$, the pdf becomes deterministic; these are well known facts about circular ensembles.

\begin{figure}[htb]
  \centering
  \includegraphics[width=0.5\textwidth]{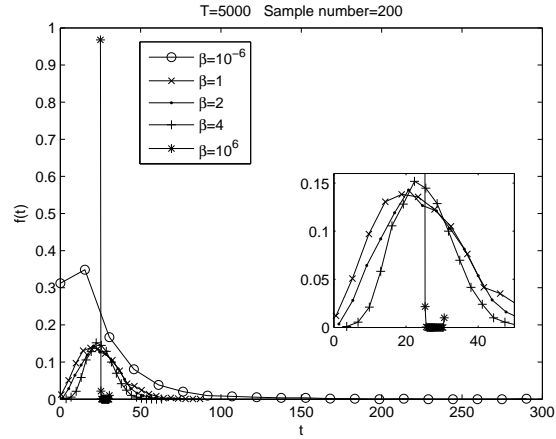}\\
    \centering
  \parbox{9cm}{ \caption{Probability distribution function of intervals generated by the circular $\beta$  ensemble
  }\label{beta_distribution}}
\end{figure}

\subsection{A comparison between different random sensing schemes}

In Fig. \ref{beta_fisher} we show the Fisher information with sampling intervals generated by the circular $\beta$ ensemble.  
The corresponding estimation performance comparison is given in Fig. \ref{com_final}.
The performance of the best and worst sequences with and without the sparsity condition  are also shown for comparison. Note that when $\beta=10^6$, the sampling sequence coincide with the worst obtained via dynamic programming, the worst under sparsity condition and uniform sensing, therefore their performances are the same.

\begin{figure}[htb]
  \centering
  \includegraphics[width=0.5\textwidth]{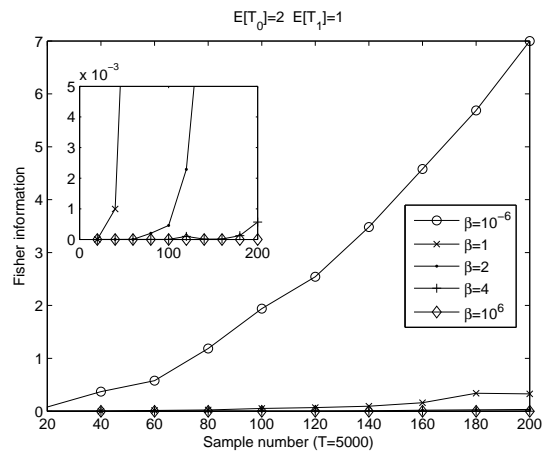}\\
     \centering
  \parbox{9cm}{\caption{Comparison of Fisher information with intervals generated by the circular $\beta$ ensemble for an exponential channel
  }\label{beta_fisher}}
\end{figure}

\begin{figure}[htb]
  \centering
  \includegraphics[width=0.5\textwidth]{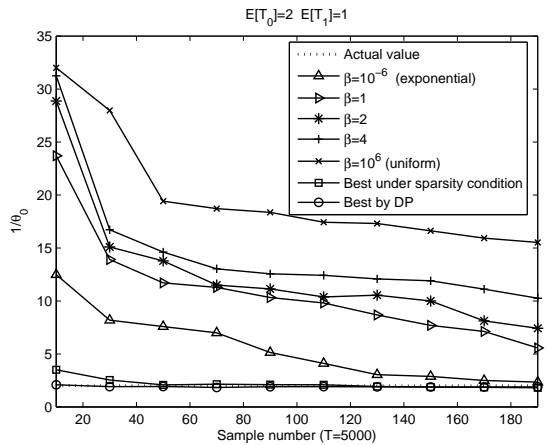}\\
    \centering
  \parbox{9cm}{ \caption{Performance comparison of random sensing with intervals generated by the circular $\beta$ ensemble for an exponential channel
  }\label{com_final}}
\end{figure}

We see again that exponentially generated sampling intervals performs the best.  This may be due to the fact that the on/off durations are also exponentially distributed, thereby creating a good ``match'' between the fisher function $g()$ and the pdf $f()$ that results in a larger value of the expected Fisher function value (see Eqn. (\ref{fisherexp})).
%

\subsection{Discussions on other channel models}

So far all our analysis and results are based on the exponential channel model.  The problem quickly becomes intractable if we move away from this model, though the basic insight should hold. 
We now examine a channel model with on/off durations following the gamma distribution.  The pdf of the on/off durations are expressed as
\begin{equation}
\label{gamma}
\left\{ \begin{array}{ll}
f_{1}(t)=t^{k_{1}-1}\frac{e^{-t/\lambda_{1}}}{\lambda_{1}^{k_{1}}\Gamma(k_{1})} \\
f_{0}(t)=t^{k_{0}-1}\frac{e^{-t/\lambda_{0}}}{\lambda_{0}^{k_{0}}\Gamma(k_{0})} ~.
\end{array}
\right.
\end{equation}
They are each parameterized by a shape parameter $k$ and a scale parameter $\lambda$, both of which are positive. In this case, the Laplace transforms of $f_{0}(t)$ and $f_{1}(t)$ are $(1+\lambda_{0} s)^{-k_{0}}$ and $(1+\lambda_{1} s)^{-k_{1}}$, respectively, and the expectation of the on/off periods are $E[T_{1}]=k_{1} \lambda_{1}$ and $E[T_{0}]=k_{0}\lambda_{0}$.
In the following simulation both $k_{1}$ and $k_{0}$ are set to $2$, 
with a simulated time of $5000$ time units. The channel parameters are set to be $E[T_{1}]=10$ and $[T_{0}]=20$ time units. The sampling intervals are randomly generated by the circular $\beta$ ensemble. We see that random sensing again outperforms uniform sensing using such a channel model.
\begin{figure}[htb]
  \centering
  \includegraphics[width=0.5\textwidth]{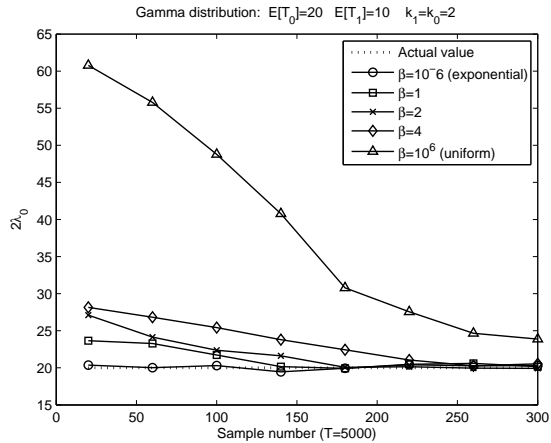}
   \centering
  \parbox{9cm}{\caption{Performance comparison of random sampling with intervals generated by the circular $\beta$ ensemble for gamma channel model}\label{gamma_channel} }
\end{figure}

It should be noted that since the gamma distribution is the conjugate prior for the exponential distribution, the latter being a special case of the former, this result is not surprising.  Unfortunately, obtaining similar result for other channel distributions becomes computationally prohibitive.   The complexity is due to two reasons.  Firstly, for most distributions the Laplace transform is complex, resulting in the complexity in obtaining the corresponding time domain expressions.  Secondly, with the exception of the exponential distribution, without the memoryless property the likelihood function also becomes intractable.

\section{Adaptive Random Sensing for Parameter Tracking} \label{sec:tracking}

Using insights we have obtained on uniform sensing and random sensing, we now present a method of estimating and tracking a time-varying parameter.  This is a moving window based estimation scheme, where the overall sensing duration $T$ is divided into windows of lengths $T_w$.  In each window samples are taken and an estimate produced at the end of that window.  This estimate is then used to determine the optimal number of samples to be taken in the next window.
This method will be referred to as the {\em adaptive random sensing} scheme. The adaptive nature of the scheme comes from adjusting the number of samples taken in each window based on past estimates.  

Specifically, at the end of the $i$-th window of $T_w$, we obtain the ML estimate $\hat{\theta}_{0}^{(i)}$ and $\hat{u}^{(i)}$ based on samples collected during that window.
Now {\em assuming} that we will use uniform sensing in the $(i+1)$th window with a sampling interval $\Delta t_p$, and assuming that $\hat{\theta}_{0}^{(i)}$ and $\hat{u}^{(i)}$ are the true parameter values in the $(i+1)$th window, we can obtain the {\em expectation} of the next estimate, denoted as $\tilde{\theta}_{0}^{(i+1)}$, as a function of $(T_w, \Delta t_p, \hat{u}^{(i)}, \hat{\theta}_{0}^{(i)})$.  The optimal sampling interval $\Delta t_p^{(i+1)}$ for the $(i+1)$th window is then calculated as follows:
\begin{equation}\label{equ9}
\Delta t_{p}^{(i+1)}=\arg\min_{\Delta t_{p}}
\bigg|{|\tilde{\theta}_{0}^{(i+1)} - \hat{\theta}_{0}^{(i)}|}-\varepsilon\bigg| ~,
\end{equation}
where $\varepsilon$ is an error factor introduced to lower bound the minimizing interval $\Delta t_p^{(i+1)}$. Without this factor the interval will end up being very small, i.e., requiring a large number of samples for the next window.  The intuition behind the above formula is that assuming the channel parameters are relatively slow varying in time, the estimate from the previous window $\hat{\theta}_{0}^{(i)}$ may be viewed as true.  So for the next window we would like to find the sampling interval that allows us to get as close as possible to this value subject to an error.

Note that the above calculation relies on the availability of $\tilde{\theta}_{0}^{(i+1)}$, a quantity obtained assuming uniform sampling will be used in the next window.  In the actual execution of the algorithm, we simply use this to obtain $\Delta t_{p}^{(i+1)}$ as shown above.  This gives us the desired number of samples to be taken in the next window: $M^{(i+1)} =\lceil { T_w/\Delta t_{p}^{(i+1)}}\rceil$.  Following this, random sensing is used to generate $M^{(i+1)}$ random sampling times within the next window.  An estimate is then made and this process repeats.

It remains to show how $\tilde{\theta}_{0}^{(i+1)}$ is obtained.  As mentioned earlier, when the on/off periods are exponentially distributed there is a simple closed-form solution to the ML estimator.  This was calculated in \cite{kim1} and we will use that result directly below.
Specifically, with $M=\lceil {T_w/\Delta t_p}\rceil$ samples uniformly taken, the estimate of channel utilization $u$ is given by
$\hat{u}=\frac{1}{M}\sum_{i=1}^{M}z_{i}$. The estimate of
$\theta_{0}$ is given by
\begin{equation}\label{equ6}
\hat{\theta}_{0}=-\frac{u}{\Delta t_{p}}\ln[\frac{-B+\sqrt{B^2-4AC}}{2A}],
\end{equation}
where
\begin{equation}\label{equ7}
\left\{ \begin{array}{l}
A=(u-u^2)(M-1)\\B=-2A+(M-1)-(1-u)n_{0}-u n_{3} \\C=A-u n_{0}-(1-u)n_{3}\end{array}
\right. ~.
\end{equation}
Here $n_{0}/n_{1}/n_{2}/n_{3}$ denotes the number of
$(0\rightarrow0)/(0\rightarrow1)/(1\rightarrow0)/(1\rightarrow1)$
transitions out of the total $(M-1)$ transitions. Their respective expectations are given by
\begin{equation}
\label{equ8} \begin{array}{ll} E[n_{0}]=M(1-u)P_{00}(\Delta t_{p};\theta_0),&E[n_{2}]=M u P_{10}(\Delta t_{p}; \theta_0),\\
E[n_{1}]=M(1-u)P_{01}(\Delta t_{p};\theta_0),&E[n_{3}]=M u P_{11}(\Delta t_{p};\theta_0).\\
\end{array}
\end{equation}
Taking these quantities into (\ref{equ7}) and (\ref{equ6}), we obtain the expectation of $\hat{\theta}_{0}$, $\tilde{\theta}_{0}$, which is a function of $(T_w, \Delta t_{p}, u, \theta_{0})$.  Replacing $u$ with $\hat{u}^{(i)}$, $\theta_{0}$ with $\hat{\theta}_{0}^{(i)}$, and $\tilde{\theta}_{0}$ with $\tilde{\theta}_{0}^{(i+1)}$ we obtain the desired result.

Figure \ref{tr} shows the tracking performance of the adaptive
random sensing algorithm, where within each moving window the sampling times are randomly place following a uniform distribution. 
In the simulation the size of
the time window is set to be $3500$ time units and the error factor $\varepsilon$ is
set at $1$. In Figure \ref{tr_off_a} the channel
parameter $E[T_{0}]$ varies as a step function: starting from $6$ time units, it is increased by $5$ 
every $30000$ time units, while $E[T_{1} ]$ is set to $E[T_{0} ]/2$. In Figure \ref{tr_off_b} the channel
parameter changes more smoothly as shown.  
The dashed line represents the actual channel parameter.  For comparison purpose we also include the results
from an adaptive uniform sensing algorithm.  These are obtained by
following the exact same adaptive procedure outlined above, with the
only difference that in the $i$-th window
uniform sensing is used, instead of random sensing, with a constant
sampling interval of $\Delta t_p^{(i)}$.
We see that the estimation under adaptive random sensing (RS) can
closely track the time-varying channel, and clearly outperforms adaptive uniform
sensing (US) at short on/off periods.  

The number of samples taken in each window (or estimation cycle) following this adaptive scheme is given in Figure \ref{track}. It shows as the on/off periods increase, the sampling rate is automatically decreased as an outcome of the tracking.

\begin{figure}[htb]
\subfigure[]{\label{tr_off_a}\includegraphics[width=0.45\textwidth]{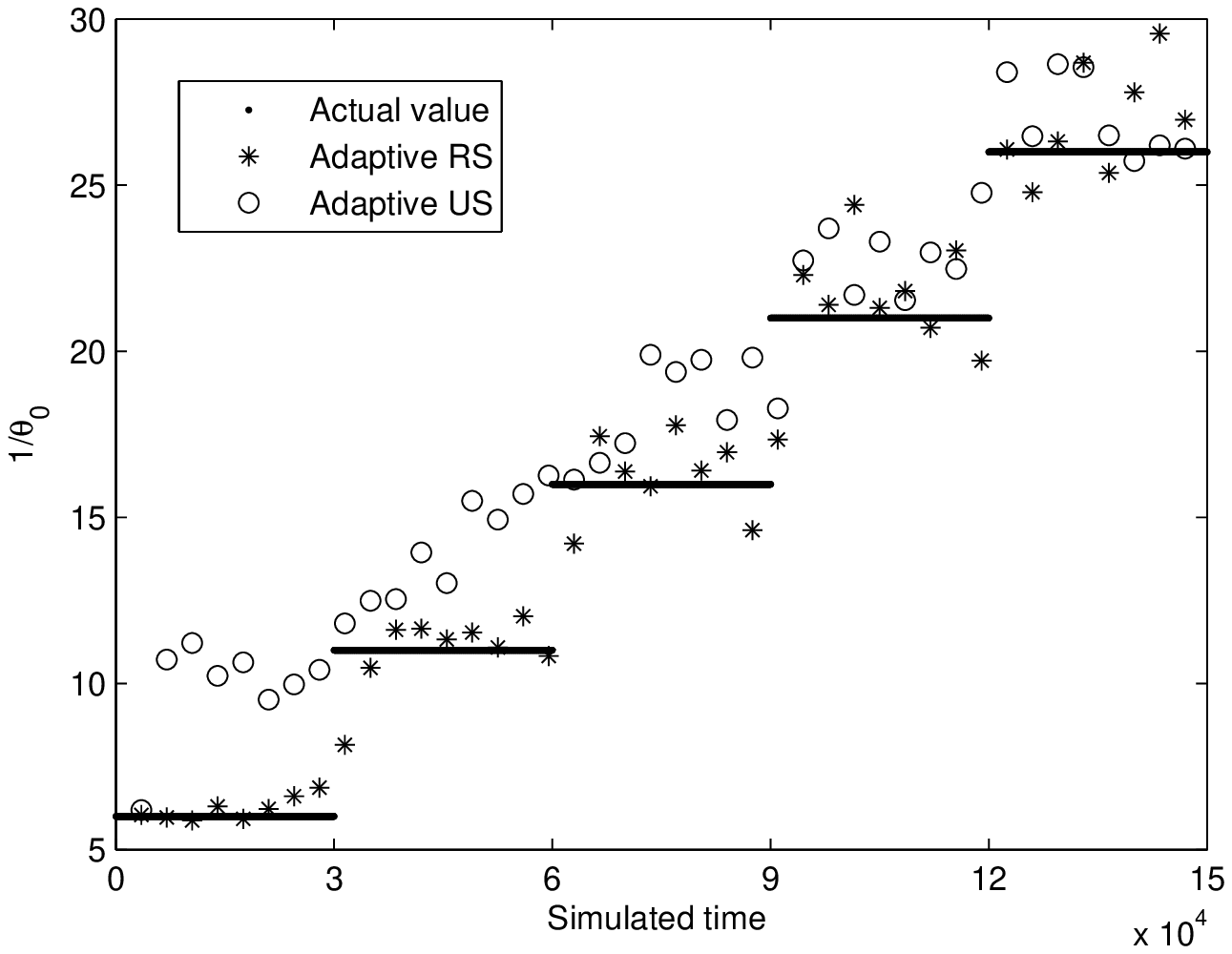}}
\subfigure[]{\label{tr_off_b}\includegraphics[width=0.45\textwidth]{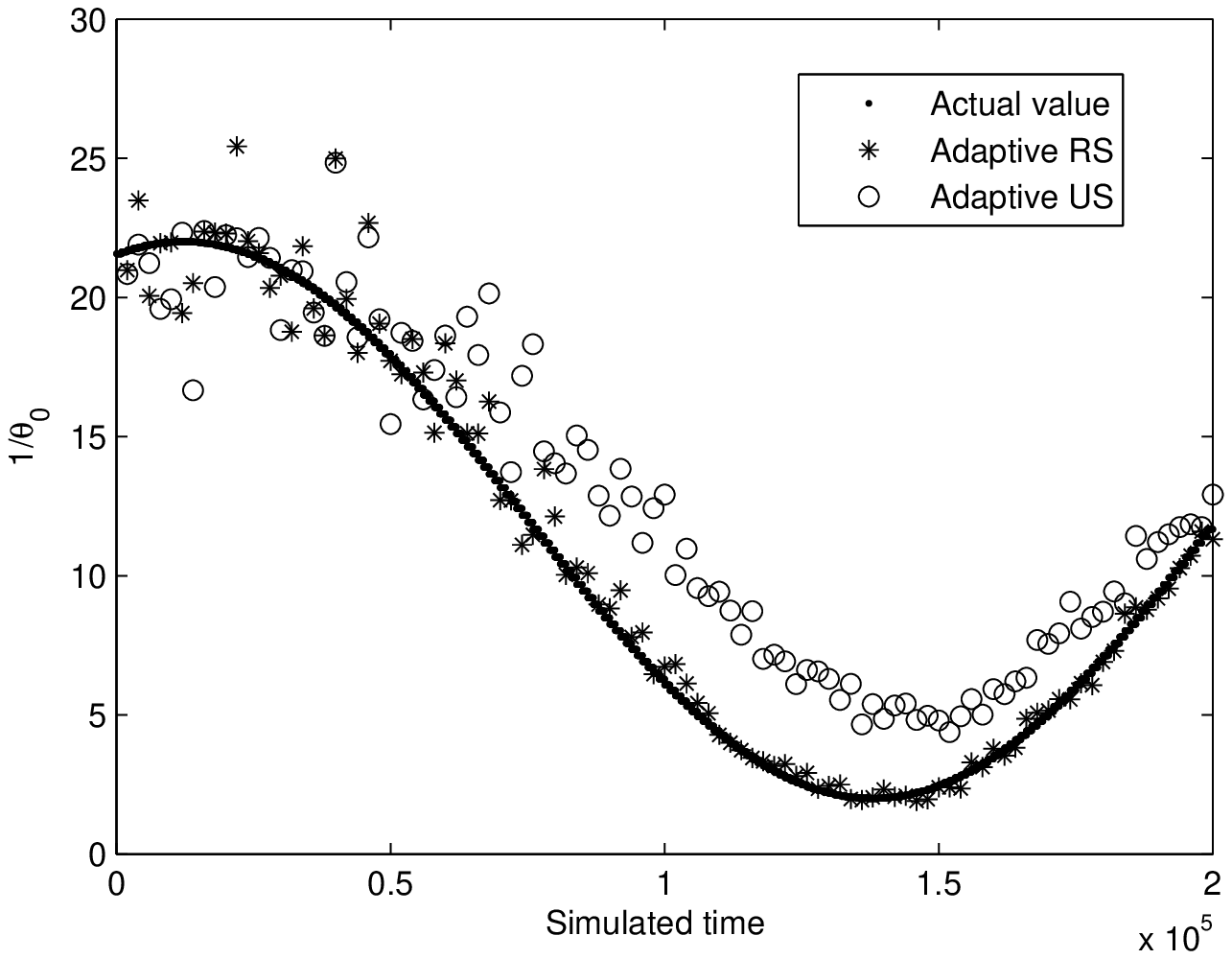}}
  \centering
  \parbox{9cm}{ \caption{Estimation performance of time-varying channel}\label{tr}}
\end{figure}

\begin{figure}[htb]

\centering
\subfigure[]{ \label{tr_num_a} \includegraphics[width=0.45\textwidth]{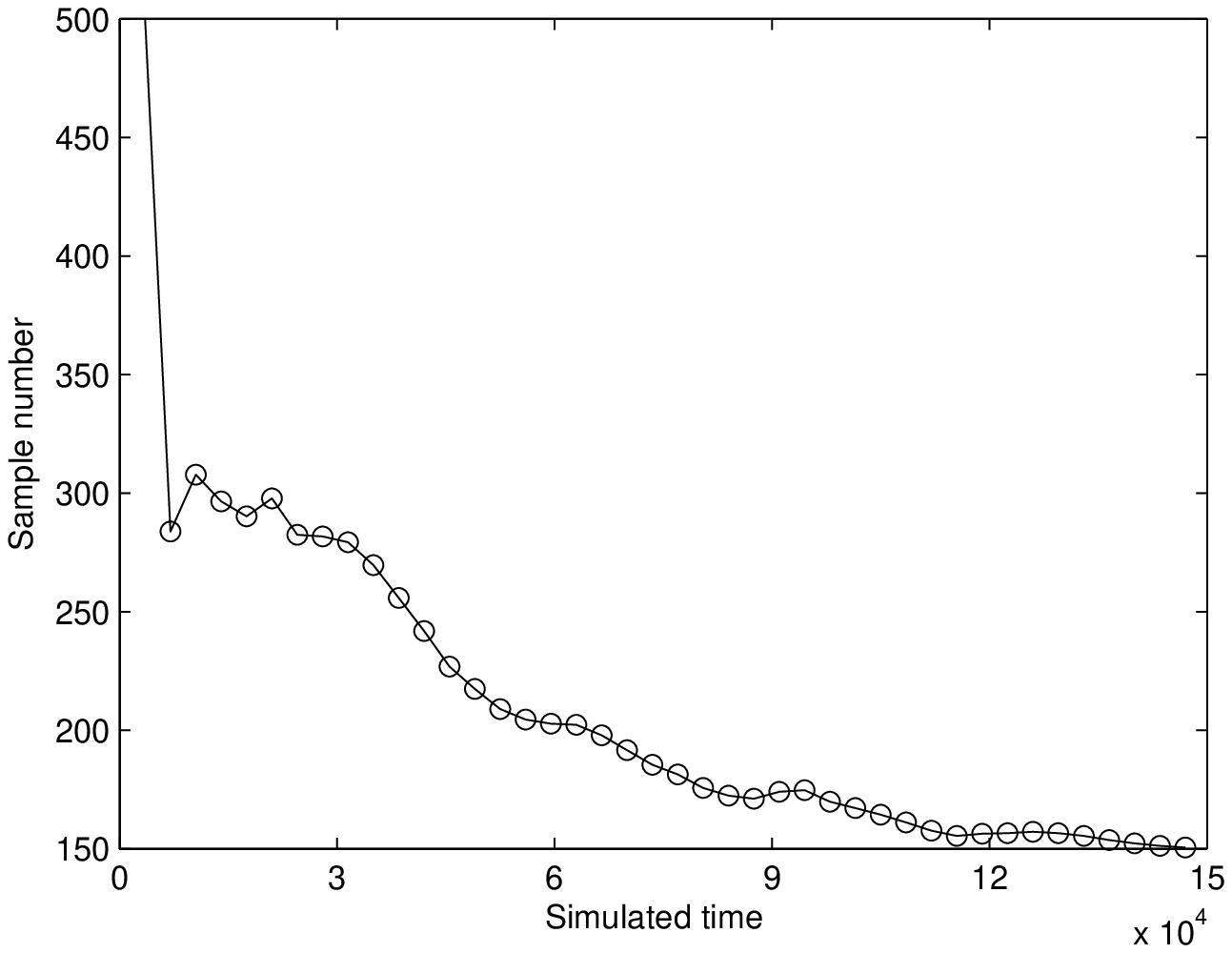}}
\subfigure[]{ \label{tr_num_b} \includegraphics[width=0.45\textwidth]{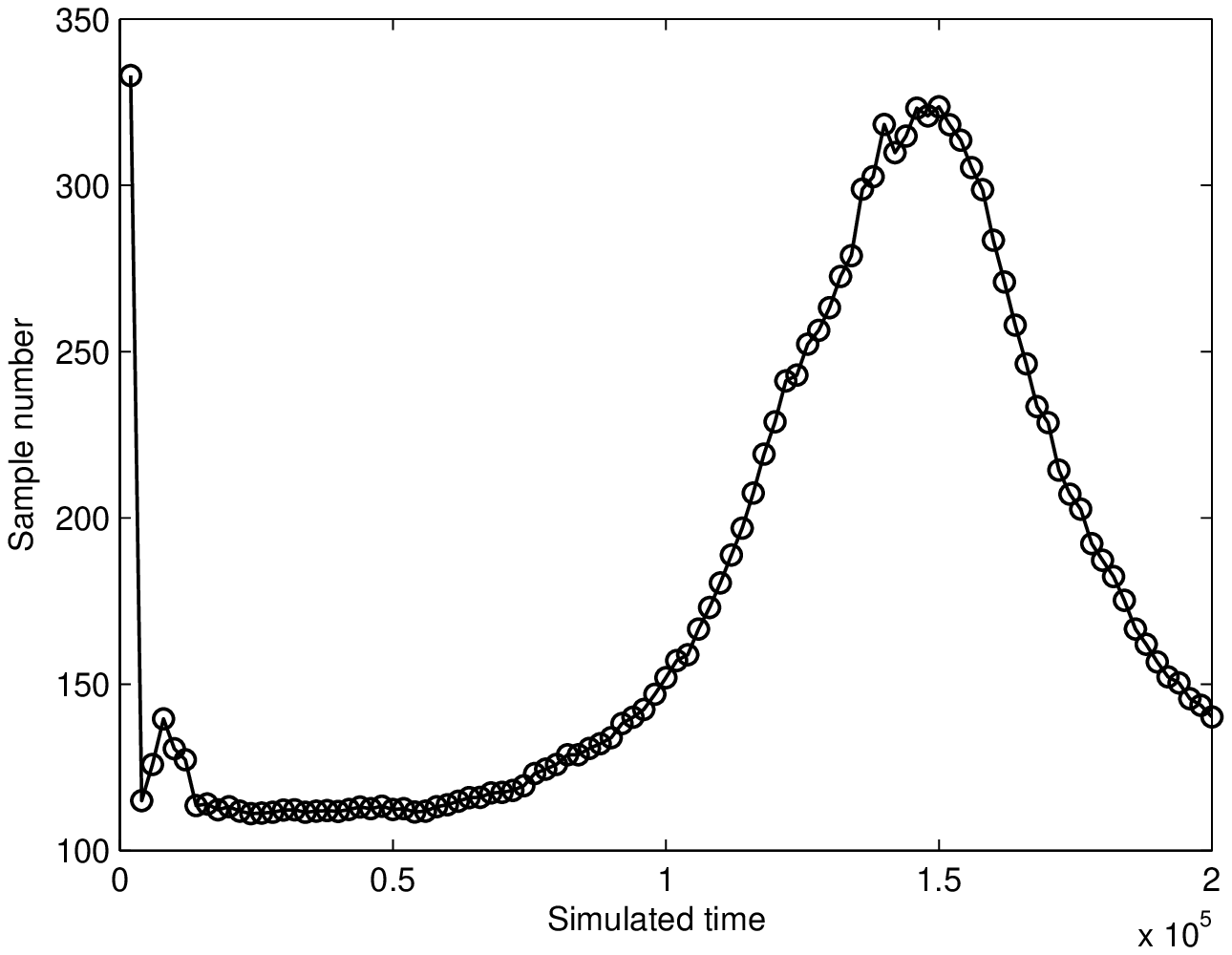}}
  \centering
\parbox{9cm}{\caption{Corresponding number of samples in each estimation window}\label{track}}
\end{figure}

\section{A Discussion on the Applicability of Compressive Sensing}\label{sec:CS}

Recent advances in compressive sensing theory \cite{cs1},\cite{cs2},\cite{cs3} allow one to represent compressible/sparse signals with significantly fewer samples than required by the Nyquist sampling theorem.  It is therefore particularly attractive in a resource constrained setting.  This technique has been used in data compression \cite{cs_compression}, channel coding \cite{cs_channelcoding}, analog signal sensing \cite{cs_physical}, routing \cite{ITA} and data collection \cite{cs_datacollection}. 
It is tempting to examine whether this technique brings any advantage for 
our channel estimation problem.  The idea is to randomly sample the channel state, use compressive sensing techniques to reconstruct the entire sequence of channel state evolution, and then use the ML estimator to determine the channel parameter.  Compared to the sensing schemes discussed in the previous sections, this is an {\em indirect} use of the ML estimator, in that the entire sequence will be reconstructed before the estimation.  In this sense the use of compressive sensing also seems to be an overkill for the purpose of parameter estimation.



Consider a vector of discrete-time, finite, one-dimensional signal $\mathbf{x}_{N\times 1}$, which can be expressed as $\mathbf{x}=\mathbf{\Psi}\mathbf{a}$, where $\mathbf{\Psi}$ is an $N\times N$ basis matrix and $\mathbf{a}$ is a vector of weighting coefficients.  The signal vector $\mathbf{x}$ is $K$-sparse if $\mathbf{a}_{N\times1}$ has only $K$ non-zero elements.
The compressive sensing theory states that the signal $\mathbf{x}$ can be reconstructed successfully by $M$ measurements $\mathbf{y}$, which is done by projecting the signal $\mathbf{x}$ to another basis $\mathbf{\Phi}$ that is incoherent with $\mathbf{\Psi}$, i.e., $\mathbf{y}=\mathbf{\Phi}\mathbf{x}=\mathbf{\Phi}\mathbf{\Psi}\mathbf{a}$.  
The required length of $\mathbf{y}$, $M$, depends on the sparsity of the signal and the reconstruction algorithm. The reconstruction is typically done by solving the $l_{1}$\emph{-norm} optimization problem: $\hat{\mathbf{a}}=\arg\min\|\mathbf{a\|_{1}}$, s.t. $\mathbf{y}=\mathbf{\Phi}\mathbf{\Psi}\mathbf{a}$.  Algorithmically this can be solved by linear programming or iterative greedy algorithm such as
orthogonal matching pursuit (OMP) \cite{OMP}.




For our channel estimation problem, consider the signal $\mathbf{x}=\{ x_1, x_2, \cdots, x_N\}$ to be the discrete time 0-1 sequence of channel states, with $x_i$ denoting the channel state at time $i$.  The physical nature of channel sensing implies the measurement matrix $\mathbf{\Phi}_{M\times N}$ consists of rows each containing only a single $1$ in the position where the channel was sensed and $0$ everywhere else.  Specifically, a $1$ in the position $(i, j)$ means that the $i$th measurement was taken at time $j$.  In addition, there can only be one measurement taken at time $j$, i.e., no two rows can have a $1$ in the same column.   As $M < N$ in general (or it wouldn't be {\em compressive} sensing), there will be exactly $N-M$ empty (all-$0$) columns, making the matrix extremely sparse.  This poses a significant challenge since in general the $\mathbf{\Phi}$ matrix is required to be dense (though randomly generated), with at least one non-zero entry in each column. 

For the reconstruction to be successful, two conditions need to be satisfied: the signal needs be sparse in some domain (i.e., the existence of a $\mathbf{\Psi}$ such that $\mathbf{a}$ is sufficiently sparse), and the two matrices $\mathbf{\Phi}$ and $\mathbf{\Psi}$ need to be incoherent.  Due to the binary property of the channel state sequence, it's difficult to find a basis matrix $\mathbf{\Psi}$ that has dense entities.
%
%
As a result we have two very sparse matrices and they are highly coherent.  
For these reasons we have not found compressive sensing to have an advantage in our channel estimation problem.
%


Figure \ref{com1} shows some comparison results.  In the simulation of compressive sensing based estimation, we reconstruct the original state sequence using Harr wavelet basis. All other conditions remain the same as in previous sections.
The time window is set to  $4096$ time units.
Overall compressive sensing based estimation dose not compare favorably with uniform sensing and random sensing, due to the coherence problem between the two matrices. 
It remains an interesting problem to find a good basis matrix that can both sparsify $\mathbf{x}$ and at the same time be sufficiently incoherent with the measurement matrix.  A similar difficulty was noted in \cite{ITA} in trying to use compressive sensing for a data gathering problem.  A number of commonly used transformations were considered, and it was found that, with real data sets, none of them was able to sparsify the data while being at the same time incoherent with the routing matrix. 


\begin{figure}
 \subfigure[{$E[T_{0}]=2$, $E[T_{1}]=1$}] {\includegraphics[width=0.32\textwidth]{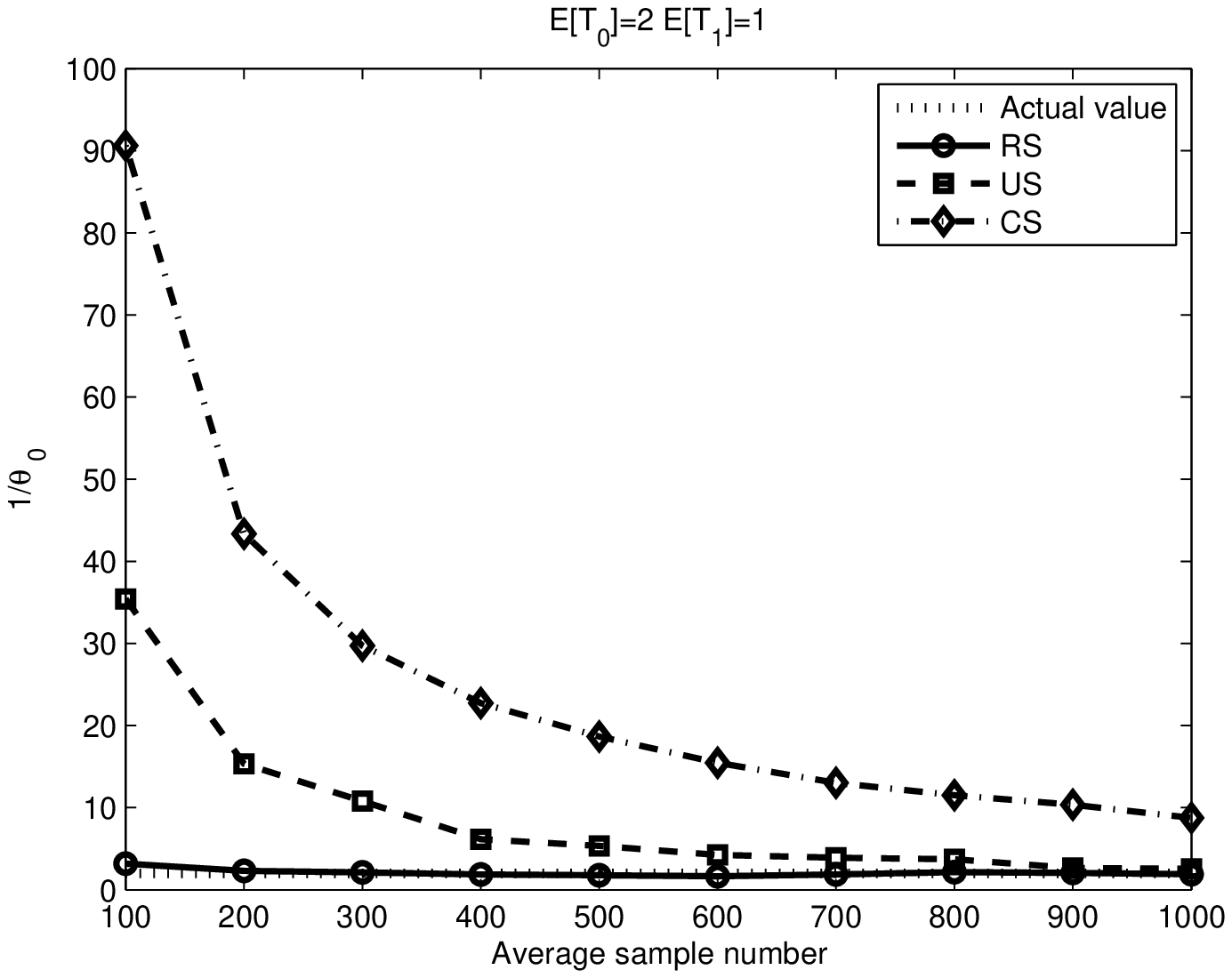}}
 \subfigure[{$E[T_{0}]=20$, $E[T_{1}]=10$}] {\includegraphics[width=0.32\textwidth]{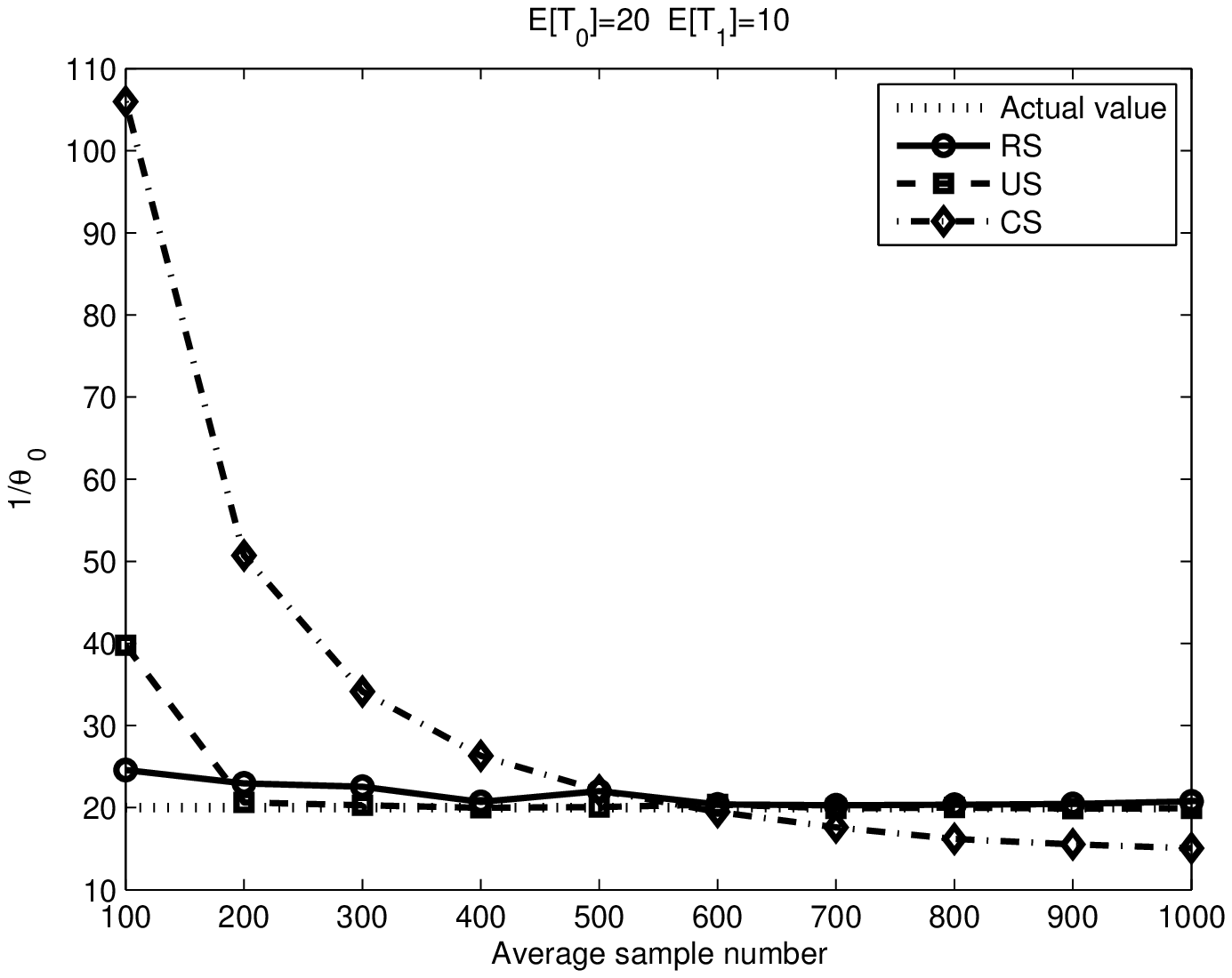}}
 \subfigure[{$E[T_{0}]=100$, $E[T_{1}]=5$}] {\includegraphics[width=0.32\textwidth]{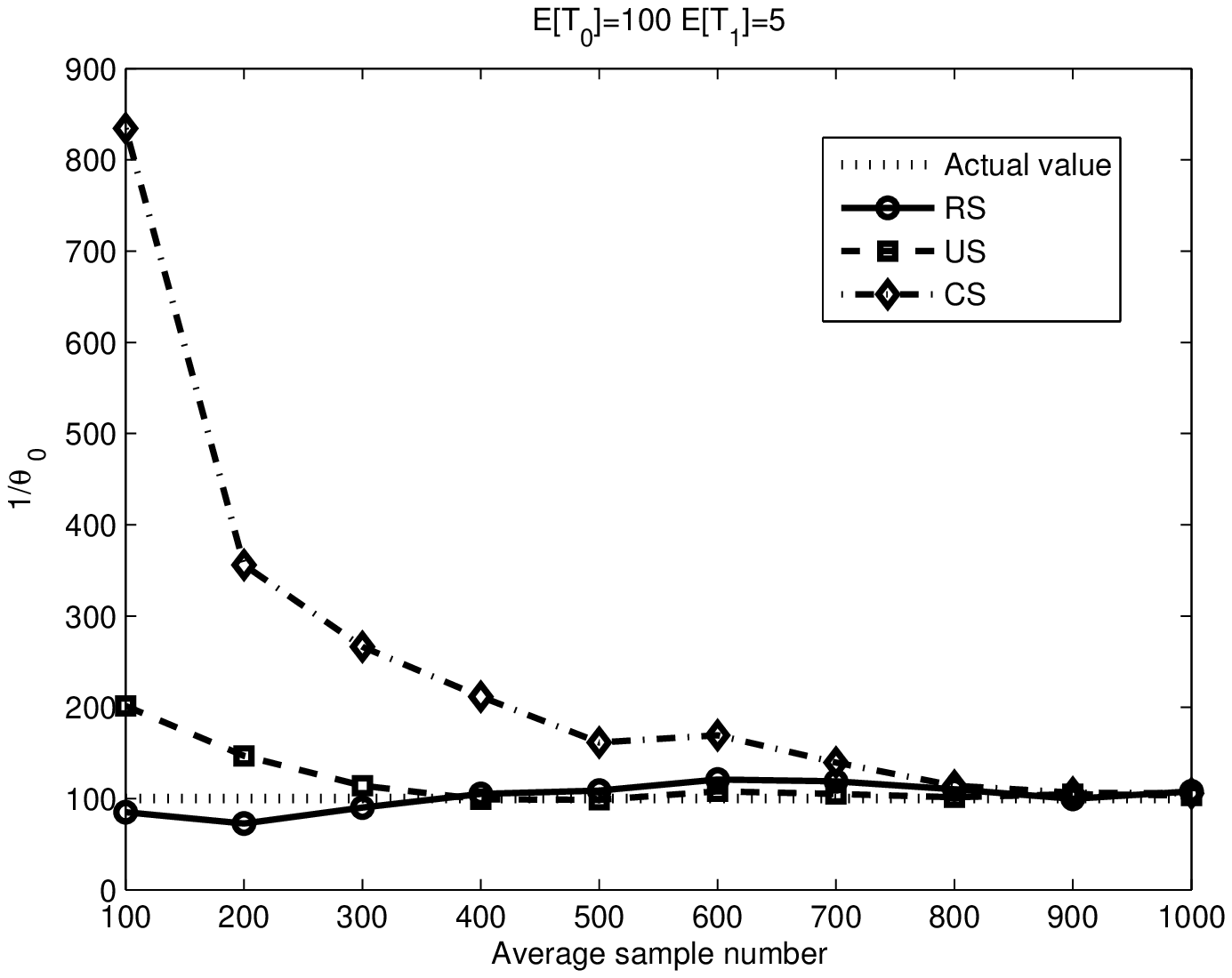}}
  \centering
  \parbox{14cm}{\caption{Estimation performance comparison: random sensing vs. uniform sensing vs. CS based sensing }\label{com1}}
  \end{figure}

\section{Conclusion} \label{sec:conclusion}
In this paper we studied sensing schemes for a channel estimation problem under a sparsity condition.  Using Fisher information as a performance measure, we derived the best and worst sensing sequences both with and without the sparsity condition.  These sequences, while not exactly implementable, provide significant insights as well as useful benchmarks.  We then examined the performance of random sensing schemes, by comparing a family of distributions generated by the circular $\beta$ ensemble. 
Using these insights, an adaptive random sensing scheme was proposed to effectively track time-varying channel parameters.  We also discuss the applicability of compressive sensing in this context.
\appendices
\section{Proof of Lemma 1}
\begin{proof}
For simplicity in presentation, we first write $g(\Delta t)=h_o(\Delta t)h(\Delta t)$, where
\begin{eqnarray*}
h_o(\Delta t) &=&\frac{\Delta t^{2}}{u^{2}}e^{-\theta_{0}\Delta t/u},\\
h(\Delta t) &=&
h_1(\Delta t) + h_2(\Delta t) + h_3(\Delta t) ~,
\end{eqnarray*}
where
\begin{eqnarray*}
h_1(\Delta t)&=&\frac{2u(1-u)}{1-e^{-\theta_{0}\Delta t/u}}, \nonumber\\
h_2(\Delta t)&=&-\frac{u(1-u)^{2}}{(1-u)+ue^{-\theta_{0}\Delta t/u}} ,\nonumber \\
h_3(\Delta t)&=&-\frac{u^{2}(1-u)}{u+(1-u)e^{-\theta_{0}\Delta t/u}}.\\
\end{eqnarray*}

We proceed to show that each of the above functions is convex under Condition \ref{cond:sparsity}.

We first show that $h_o(\Delta t)$ is strictly convex for $\Delta t > (2+\sqrt{2})u/\theta_0$.
Under this condition and noting $0<u<1$ and $\theta_0>0$ we have
\begin{eqnarray*}
h_o^{'}(\Delta t)&=&\frac{\Delta t}{u^2}e^{-\theta_{0}\Delta t/u}(2-\frac{\theta_0 \Delta t }{u}) <0,\\
h_o^{''}(\Delta t)&=&\frac{e^{-\theta_{0}\Delta t/u}}{u^2}[(\frac{\theta_0 \Delta t }{u}-2)^2-2] >0 .\\
\end{eqnarray*}
Therefore for $\frac{\theta_0 \Delta t}{u} > 2+\sqrt{2}$, $h_o(\Delta t)$ is strictly convex.  
That $h_1(\Delta t)$ is strictly convex is straightforward.  Since $0<u<1$ and $\theta_0>0$, we have:
\begin{eqnarray*}
h_1^{'}(\Delta t)&=&\frac{-2(1-u)\theta_0 e^{-\theta_{0}\Delta t/u}}{(1-e^{-\theta_{0}\Delta t/u})^2} <0 ,\\
h_1^{''}(\Delta t)&=&\frac{2(1-u)\theta_0^{2}e^{-\theta_{0}\Delta t/u}(1+e^{-\theta_{0}\Delta t/u})}{u(1-e^{-\theta_{0}\Delta t/u})^3}>0.\\
\end{eqnarray*}

Next we show that $h_2(\Delta t)$ is strictly convex for $\Delta t >  \frac{u}{\theta_0} \ln(\frac{u}{1-u})$.  This condition is equivalent to $ue^{-\theta_0 \Delta t/u} < 1-u$.  Under this condition and again noting $0<u<1$ and $\theta_0>0$, we have
\begin{eqnarray*}
h_2^{'}(\Delta t)&=&\frac{-u(1-u)^{2}\theta_0 e^{-\theta_{0}\Delta t/u}}{[(1-u)+ue^{-\theta_{0}\Delta t/u}]^2} < 0,\\
h_2^{''}(\Delta t)&=&\frac{(1-u)^2\theta_0^{2}e^{-\theta_{0}\Delta t/u}[(1-u)-ue^{-\theta_{0}\Delta t/u}]}{[(1-u)+ue^{-\theta_{0}\Delta t/u}]^3} > 0.\\
\end{eqnarray*}

Similarly, $h_3(\Delta t)$ is strictly convex under the condition $\Delta t >  \frac{u}{\theta_0} \ln(\frac{1-u}{u})$, since
\begin{eqnarray*}
h_3^{'}(\Delta t)&=&\frac{-u(1-u)^{2}\theta e^{-\theta_{0}\Delta t/u}}{[u+(1-u)e^{-\theta_{0}\Delta t/u}]^2} <0,\\
h_3^{''}(\Delta t)&=&\frac{(1-u)^2\theta^{2}e^{-\theta_{0}\Delta t/u}[u-(1-u)e^{-\theta_{0}\Delta t/u}]}{[u+(1-u)e^{-\theta_{0}\Delta t/u}]^3}>0.\\
\end{eqnarray*}

Therefore under the condition $\Delta t>\alpha u/ \theta_0$, 
$h_1$, $h_2$ and $h_3$ are all monotonically decreasing convex functions.  It follows that $h=h_1+h_2+h_3$ is also monotonically decreasing and convex.  Furthermore, for any $\Delta t >0$, $h_o(\Delta t)>0$, and $h(\Delta t) > h(+\infty) = 0$.  We can now show that $g$ is strictly convex under this condition:
\begin{equation}
\label{g''}
\begin{split}
g^{''}(\Delta t)&=(h_o(\Delta t)h(\Delta t))^{''}\\
&=h_o^{''}(\Delta t)h(\Delta t)+2h_o^{'}(\Delta t)h^{'}(\Delta t)+h_o(\Delta t)h^{''}(\Delta t)>0 ~,
\end{split}
\end{equation}
where the inequality holds because every term on the right hand side is positive under the condition
$\Delta t>\alpha u/ \theta_0$ as summarized above.
\end{proof}

\ifCLASSOPTIONcaptionsoff
  \newpage
\fi

\end{document}